\numberwithin{equation}{section}
\DeclarePairedDelimiter{\abs}{\lvert}{\rvert}   
\DeclarePairedDelimiter{\norma}{\lVert}{\rVert}
\DeclareMathOperator{\diag}{diag}
\newcommand{\R}{ \mathbb{R}}
\newcommand{\Hil}{\mathcal{H}}
\newcommand{\four}{\mathcal{F}}
\newcommand{\n}{\noindent}
\newcommand{\f}{\frac}
\newcommand{\vs}{\vspace{0.5cm}}
\newcommand{\vsa}{\vspace{0.2cm}}
\newcommand{\ba}{\begin{eqnarray}} 
\newcommand{\ea}{\end{eqnarray}}
\newcommand{\be}{\begin{equation}}
\newcommand{\ee}{\end{equation}}
\lbrace\begin{array}{@{}l@{}}}%
\theoremstyle{plain} 
\newtheorem{lemma}{Lemma}
\newtheorem{theorem}{Theorem}
\newtheorem{proposition}{Proposition}
\theoremstyle{remark}
\newtheorem{remark}{Remark}
\theoremstyle{definition}
\newtheorem{definition}{Definition}
\title{Efimov effect for a three-particle system with two identical  fermions}
\author[]{Giulia Basti}
\address{Dipartimento di Matematica G. Castelnuovo, Sapienza Universit\`a di Roma,  Piazzale  Aldo Moro, 5, 00185 Roma, Italy}
\email{basti@mat.uniroma1.it}
\author[]{Alessandro Teta}
\address{Dipartimento di Matematica G. Castelnuovo, Sapienza Universit\`a di Roma,  Piazzale  Aldo Moro, 5, 00185 Roma, Italy}
\email{teta@mat.uniroma1.it}
\date{}
\let\orgdescriptionlabel\descriptionlabel
\renewcommand*{\descriptionlabel}[1]{%
  \let\orglabel\label
  \let\label\@gobble
  \phantomsection
  \edef\@currentlabel{#1}%
  \let\label\orglabel
  \orgdescriptionlabel{#1}%
}
\begin{document}


\vs


\begin{abstract}

We consider a three-particle quantum system in dimension three composed of two identical fermions of mass one and  a different particle of mass $m$. The particles interact via two-body short range potentials. We assume that  the  Hamiltonians of all the two-particle subsystems do not have bound states with negative energy and, moreover,  that the Hamiltonians of the two subsystems made of a fermion and the different particle have a zero-energy resonance. 
Under these conditions and for $m<m^* = (13.607)^{-1}$, we give a rigorous proof of the occurrence of the Efimov effect, i.e., the existence of infinitely many negative eigenvalues for the three-particle Hamiltonian $H$. More precisely, we prove that  for $m>m^*$ the number of negative eigenvalues of $H$ is finite and   for $m<m^*$ the number $N(z)$ of negative eigenvalues of $H$ below $z<0$ has the asymptotic behavior $N(z) \sim \mathcal C(m) |\log|z||$ for $z \rightarrow 0^-$. Moreover, we give an upper and a lower bound for the positive constant $\mathcal C(m)$.

\end{abstract}
\maketitle

\vs

\section{Introduction}

Efimov effect is a remarkable physical phenomenon occurring in three-particle quantum systems in dimension three. It was  discovered by Efimov in 1970  (\cite{efimov1}, \cite{efimov2}) and it  consists in the following. Let us assume that the  particles interact via two-body  short range potentials, the two-particle subsystems do not have bound states and at least two of them exhibit a zero-energy resonance.
 Then the Hamiltonian describing the three-particle system has infinitely many negative eigenvalues $E_n$ accumulating at zero. Moreover, the eigenvalues satisfy the asymptotic geometrical law
\be\label{gela}
\f{E_{n+1}}{E_n} \; \rightarrow \; e^{-\f{2\pi}{s_0}}\,, \quad\quad \text{for} \quad n\rightarrow \infty
\ee
where the parameter $s_0>0$ depends only on the mass ratios and, possibly, on the statistics of the particles. 
The  three-particle bound states (or trimers) associated to the eigenvalues $E_n$  are characterized by a size much larger than the range of the two-body potentials. They are determined by a long range, attractive effective interaction of kinetic origin which is produced by the resonance condition and it is independent of the details of the two-body potentials. According to an intuitive physical picture,   one can say that in a trimer     
 the attraction between two particles is  mediated by the third one, which is moving back and forth between the two. Note that  the attraction disappears if the two-body potentials become more attractive causing the destruction of the zero-energy resonance. 
 
\n
We emphasize that Efimov effect describes  a universal low-energy behavior of the three-particle system.  As a consequence of this  universality character,  the effect can be realized and observed   in various physical contexts (e.g., in atomic, molecular, nuclear or condensed matter physics) and this fact has motivated a large number of experimental and theoretical works  published on the subject in recent years (see, e.g., the reviews \cite{braaten}, \cite{naidon}).

\n
The original Efimov's physical argument  is based on the replacement of the two-body potential with a boundary condition, which is essentially equivalent to consider a two-body zero-range interaction, and on the introduction of hyper-spherical coordinates. If the resonant condition is satisfied, in these coordinates the problem become separable  and in the equation for the hyper-radius $R$ the long range, attractive effective  potential $- (s_0^2 +1/4)/R^2$ appears. The behavior for small $R$ of this potential is too singular and an extra boundary condition at short distance  must be imposed to restore self-adjointness. After this ad hoc procedure, one obtains the infinite sequence of negative eigenvalues satisfying the law \eqref{gela} as a consequence of the large $R$ behavior of the effective potential.

\n
The first mathematical result on the Efimov effect was obtained by Yafaeev in 1974 (\cite{yafaev}). He studied a symmetrized form of the Faddeev equations  for the bound states of the three-particle Hamiltonian and proved the existence of an infinite number of negative eigenvalues. In 1993 Sobolev (\cite{sobolev}) used a slightly different symmetrization of the equations and proved the asymptotics 
\be\label{asob}
\lim_{z \rightarrow 0^-} \f{N(z)}{ |\log |z||} = \f{s_0}{2\pi}  
\ee
 where $N(z)$ denotes the number of eigenvalues smaller than $z<0$. Note that \eqref{asob} is consistent with the law \eqref{gela}. In the same year Tamura (\cite{Tam2}) obtained the same result under more general conditions on the two-body potentials.  Other mathematical proofs of the effect were obtained by  Ovchinnikov and Sigal in 1979 (\cite{sigal}) and Tamura in 1991 (\cite{Tam1}) using a variational approach based on the Born-Oppenheimer approximation (see also the related result in \cite{simon}). For more recent results on the subject see  \cite{gridnev1}, \cite{gridnev2}.  It is worth mentioning that a mathematical proof of the geometric asymptotic law \eqref{gela}  is still lacking (see the conjecture discussed in \cite{ahkw}). We also mention that for three identical fermions in dimension three the Efimov effect is absent (\cite{VZ})

\n
In this paper we study the case of a three-particle system in dimension three  composed of two identical fermions with mass one and a different particle with mass $m$. In the physical literature  such a system has been extensively studied (see, e.g., \cite{castin}, \cite{naidon} and references therein) and it is known  that the Efimov effect can be present with  some peculiar features. Indeed,    the effect is present only for $m < m^* = (13.607)^{-1}$ and,  considering  a partial wave decomposition, it takes place only in the subspaces corresponding to the    \hspace{0.01cm} ``odd waves'' (contrary to the case of identical bosons or distinguishable particles where the effect takes place in the ``s-wave'' subspace). Following the approach based on the analysis of the Faddeev equations, we give a mathematical proof of these facts. More precisely, we assume that: a)  the two-body potentials are short  range, rotationally invariant, non positive; b) the Hamiltonians of all   the two-particle subsystems are positive; c)  the Hamiltonian of the subsystems composed of the particle with mass $m$ and a fermion has a zero-energy resonance. Then we prove that  the Hamiltonian of the three-particle system has a finite number of negative eigenvalues for $m>m^*$ and  an infinite number of negative eigenvalues accumulating at zero for $m<m^*$. We also prove the asymptotic behavior \eqref{asob}, where the constant at the right hand side, which in our case is denoted by $\mathcal C(m)$, depends only on the mass $m$ and it is estimated from below and from above.

\n
We note that, under our assumptions, the interaction potential between the two fermions does not produce zero-energy resonance and therefore it plays no role in the occurrence of the Efimov effect.

\n
The method of the proof follows the line of reasoning of \cite{sobolev}, with the modifications required  to take into account of the peculiarity of our system. In particular, we formulate the eigenvalue problem $H\psi=z\,\psi,\;$ $z<0,\;$ for the three-particle Hamiltonian in terms of symmetrized Faddeev equations $\Psi = \boldsymbol{A} (z) \Psi$, 
where   $\boldsymbol{A}(z)$ is a $2\times 2$ matrix (compact) operator, and we  prove that $ N(z)=n(1,\boldsymbol{A}(z))$, 
where the right hand side is the number of eigenvalues of $\boldsymbol{A}(z)$ larger than one. Then the problem is reduced to the study of the asymptotics of $n(1,\boldsymbol{A}(z))$ for $z \rightarrow 0^-$. The presence of the zero-energy resonance for the subsystems composed of the particle with mass $m$ and a fermion determines a singular behavior (and a  lack of compactness) of $\boldsymbol{A}(z)$ for $z=0$ and this is the reason for the possible divergence of $n(1,\boldsymbol{A}(z))$ for $z \rightarrow 0^-$.  Through some successive steps, we single out such singular behavior  neglecting operators which, for $z\leq 0$, are compact and continuous in $z$.   At the end we find that the asymptotics for $z \rightarrow 0^-$ of $n(1,\boldsymbol{A}(z))$  reduces to the asymptotics for $R\rightarrow \infty$ of an operator $S_R$ which has an explicit form. By a direct analysis of such an operator, we conclude the proof of the main result in the two cases $m<m^*$ and $m>m^*$.

\n
The paper is organized as follows.

\n
In section 2 we describe the three-particle model, formulate our assumptions on the interaction potentials and state the main result. 
In section 3 we briefly recall some results on the low-energy behavior of a two-particle Hamiltonian in presence (or in absence) of a zero-energy resonance. 
In section 4 we introduce the symmetrized Faddeev equation for the bound states of our three-particle Hamiltonian and we  prove that $ N(z)=n(1,\boldsymbol{A}(z))$. 
In section 5 we characterize the leading term of $\boldsymbol{A}(z)$ for $z \rightarrow  0^-$, neglecting operators which, for $z \leq 0$, are compact and continuous in $z$.  
In section 6 we prove the main result exploiting the asymptotic behavior of the leading term.

\vs\vs
\section{Notation and main result}

We consider a quantum system composed by two identical fermions of unitary mass  and a different particle of mass $m_3=m.$ Let $x_1,x_2\in \R^3$ denote the coordinates of the fermions and $x_3\in\R^3$ the coordinates of the third particle. The state of the system is then described by  a wave function $\psi\in L^2(\R^9) $ which satisfies the symmetry condition $ \psi(x_1,x_2,x_3)=-\psi(x_2,x_1,x_3)$ and the Hamiltonian is typically of the form 
\be
	\tilde{H}= -\f{1}{2} \Delta_{x_1}  -\f{1}{2} \Delta_{x_2} -\frac{1}{2m}\Delta_{x_3}+v_{12}(x_1 - x_2) + v_{23}(x_2 - x_3) + v_{31}(x_3 - x_1)
\ee
where $\Delta_{x_i}$ denotes the laplacian with respect to the coordinates of the i-th particle and $v_{\alpha}$,  $\alpha\in\{12,23,31\}$ is the two-body real-valued potential associated to 
 the pair of particles $\alpha$. Due to the simmetry constraint,  we have $v_{23}=v_{31}:=v$. 
 
\n
We introduce the coordinates $(R, x_{\alpha},y_{\alpha})$, where $R$ is the coordinate of the center of mass and 
 $(x_{\alpha},y_{\alpha})$ is any pair of Jacobi coordinates, e.g.,  for $\alpha = 12$ one has  
\begin{equation}
x_{12}= x_1 - x_2, \qquad \qquad 
y_{12}= x_3 - \f{ x_1 + x_2}{2}.  
\end{equation}
In such coordinates  the Hamiltonian takes the form
\be
\tilde{H} = - \f{1}{2(m+2)} \Delta_{R} - \f{1}{2 \mu_{\alpha}} \Delta_{x_{\alpha}} - 
\f{1}{2 n_{\alpha}} \Delta_{y_{\alpha}} + \sum_{\beta} v_{\beta} (x_{\beta})
\ee
where
\be
	\begin{aligned}
		\mu_{12}&=\frac{1}{2} &\qquad \mu_{23}&=\mu_{31}= \mu=\frac{m}{m+1}\\
		n_{12}&=\frac{2m}{m+2} &\qquad n_{23}&=n_{31}=n=\frac{m+1}{m+2}
	\end{aligned}
\ee

\n
and $x_{\beta}$, for $\beta \neq \alpha$, is expressed in terms of $(x_{\alpha}, y_{\alpha})$.
\n
Moreover, it is convenient to extract the center of mass motion and to study the problem in momentum space. Let   $(k_\alpha,p_\alpha)$ be the pair of  variables conjugate with respect to the Jacobi coordinates $(x_{\alpha},y_{\alpha})$.  Denoting with $k_i$ the conjugate variable of $x_i$, they are explicitely defined by
\begin{equation}\label{eq:k,p}
	\begin{aligned}
		k_{12}&=\displaystyle{\frac{k_1-k_2}{2}} &\qquad p_{12}&=\displaystyle{\frac{m(k_1+k_2)-2k_3}{m+2}}\\		
		k_{23}&=\displaystyle{\frac{mk_2-k_3}{m+1}} &\qquad p_{23}&=\displaystyle{\frac{(k_2+k_3)-(m+1)k_1}{m+2}}\\
		k_{31}&=\displaystyle{\frac{k_3-mk_1}{m+1}} &\qquad p_{31}&=\displaystyle{\frac{(k_1+k_3)-(m+1)k_2}{m+2}}.
	\end{aligned}
\end{equation}
Then, for any choice of pair $\alpha$,  the Hamiltonian of the system can be written as

\be
H=H_0+\sum_{\beta}V_\beta
\ee
 where 
 \be
 \displaystyle{H_0=\frac{k_\alpha^2}{2\mu_{\alpha}}+\frac{p_\alpha^2}{2n_\alpha}}
 \ee
  is the free Hamiltonian and
\be\label{intfour}
	(V_\beta\psi) (k_\beta,p_\beta)=\frac{1}{(2\pi)^{3/2}}\int dk\, \hat{v}_\beta(k-k_\beta)\psi(k,p_\beta)
\ee
describes each interaction term. In~\eqref{intfour} and in the following we denote  by $\hat{f}$  the Fourier transform of $f.$ 
Taking into account of the definitions given in  \eqref{eq:k,p},  the symmetry constraint reduces to  $\psi(k_{23},p_{23})=-\psi(-k_{31},p_{31})$ or equivalently $\psi(k_{12},p_{12})=-\psi(- k_{12},p_{12}).$ Therefore, choosing for instance the coordinates $(k_{23},p_{23})$,  the Hilbert space of the system is
\be\label{eq:Hil}
	\Hil=\left\{\psi\in L^2(\R^6)\,|\, \psi(k_{23},p_{23})=-\psi\left(\frac{k_{23}}{m+1}-\frac{m(m+2)}{(m+1)^2}p_{23},-k_{23}-\frac{p_{23}}{m+1}\right)\right\}.
\ee
During the proof  it will be useful to use also  the system of coordinates $(p_{23},p_{31}).$ Using the relations
\begin{equation}\label{eq:p,q}
	\begin{aligned}
	k_{23}&=-p_{31}-\frac{1}{m+1}p_{23}\\
	k_{31}&=p_{23}+\frac{1}{m+1}p_{31}\\
	\end{aligned}
\end{equation}
one finds that the free Hamiltonian can be rewritten as 

\be\label{eq:H_0(p)}
\displaystyle{H_0= \f{p^2_{23}}{2\mu} +\f{p^2_{31}}{2\mu} +\frac{p_{23}\cdot p_{31}}{m}}.
\ee
The exchange of the fermionic coordinates corresponds to the exchange of $p_{23}$ and $p_{31}$. Thus, using the coordinates $(p_{23},p_{31})$,  the  Hilbert space of the system can be equivalently written as
\be
	\Hil_1=\big\{\psi\in L^2(\R^6)\,|\, \psi(p_{23},p_{31})=-\psi(p_{31},p_{23}) \big\}.
\ee

\vsa

\n
Note that for $f \in \mathcal H$ we have $g \in \mathcal H_1$, where $
g(p_{23},p_{31})= f\!\left(\! -p_{31}-\f{p_{23}}{m+1}, p_{23}\!\right).$ 

\n
It is also useful to introduce the two-particle subsystems of our three-particle system,  described  by the following  Hamiltonians in $L^2(\R^3)$
\be\label{halfa}
h_{\alpha}= -\frac{1}{2\mu_{\alpha}}\Delta_{x}+v_{\alpha}(x)
\ee

\n
In order to formulate our main result we introduce below our assumptions on the two-body potentials, on the zero energy properties of the two-particle subsystems and on the mass ratio. 

\n 
Concerning the two-body potentials $v_{\alpha}$ and the Hamiltonians $h_{\alpha}$, we assume that the following conditions hold for any pair $\alpha$

\vsa
\begin{description}
	\item[\label{ass1}{$(A_1)$}] $\abs{v_{\alpha}(x)}\leq C(1+\abs{x})^{-b}$, with $b>3$
	\vsa
	\item[\label{ass2}{$(A_2)$}] $v_{\alpha}$ is spherically symmetric, that is $v_{\alpha}(x)=v_{\alpha}(\abs{x})$
	
	\vsa
	\item[\label{ass3}{$(A_3)$}] $v_{\alpha}\leq 0$    
	\vsa
	\item[\label{ass4}{$(A_4)$}] $h_\alpha \geq 0\;\;\;$ 
\end{description} 

\vsa

\n
A further important assumption is the presence of a zero energy resonance for the two-particle Hamiltonian $h_{23}=h_{31}$. Here we  recall  the definition of a zero-energy resonance,  while further comments and some useful results will be given in the next section. 

\n
Let us consider an Hamiltonian in $L^2(\R^3)$

\be\label{hu}
h=-\frac{1}{2\mu}\Delta_x+u(x)
\ee
with $\mu>0$ and $u$ a generic  potential  satisfying  assumptions \ref{ass1}, \ref{ass2} and \ref{ass3}. We denote by $g_0$ the integral operator with kernel 
\be
	g_0(x,x')=\frac{\mu}{2\pi|x-x'|}.
\ee
Then we have
\begin{definition}\label{defre}
 Zero is a (simple) resonance for $h$ if $1$ is a simple eigenvalue of the operator $|u|^{1/2}g_0|u|^{1/2}$ and the 	corresponding eigenfunction $\varphi$ satisfies $(|u|^{1/2},\varphi)\neq 0.$ 
\end{definition}

\n

\n
Let us comment on the above assumptions. We remark that if each $v_{\alpha}$ satisfies condition \ref{ass1} then, via Kato-Rellich theorem,  we have self-adjointness and lower boundedness of $H$ on the same domain of $H_0.$

\n 
Moreover, using the HVZ theorem, we know that the essential spectrum of $H$ is of the form $[l,+\infty)$, where $l$ is the lowest point of the spectra of the operators $h_{\alpha}$ describing the two-particle subsystems. Thus the assumption \ref{ass4} implies that on the right of $0$ the spectrum of $H$ contains only isolated eigenvalues with finite multiplicities.  

\n
The assumptions \ref{ass2}, \ref{ass3} are introduced to simplify the analysis. Note that  zero-energy resonance can not be present in a two-particle system if the interaction potential is positive or if the two particles are identical fermions and the potential is spherically symmetric 
 (see remark \ref{remfer} at the end of the next section).

\n
As we already pointed out in the introduction, a peculiar aspect of the Efimov effect in our fermionic system is that it only occurs for a certain range of values of the mass ratio $(0,m_*)$, which can be defined as follows. 
Let $\Lambda(m)$ be the following function of the mass
\begin{equation}\label{eq:Lambda}
	\Lambda(m)=\frac{2(m+1)^2}{\pi}\left(\frac{1}{\sqrt{m(m+2)}}-\arcsin\left(\frac{1}{m+1}\right)\right).
\end{equation}
It is easy to check that $\Lambda(m)$ is decreasing and
\begin{equation}
	\lim_{m\to 0}\Lambda(m)=+\infty,\qquad\lim_{m\to +\infty}\Lambda(m)=0
\end{equation}
Thus the following definition makes sense
\begin{definition}\label{def:m*}
The critical mass $m_*$ is the unique solution of the equation $\Lambda(m)=1.$
\end{definition}

\n
Note that for $m<m_*$ ($m>m_*$) we have $\Lambda(m)>1$ ($\Lambda(m)<1$). Moreover,  we stress that $m_*$ is the same mass threshold obtained   in the study of the corresponding system with point interactions (see, e.g., \cite{fincoteta}, \cite{CDFMT}, \cite{minlos}).

\n
We are now ready to formulate our main result.
\begin{theorem}\label{th:main1}
Let us assume that conditions \ref{ass1}, \ref{ass2}, \ref{ass3}, \ref{ass4} hold for any pair $\alpha$ and that  the two-particle Hamiltonian  $h_{23}=h_{31}$ has a zero energy resonance. Then the following holds.

\n
\begin{enumerate}[i)]
	\item For $m<m_*$  there exists a positive constant $\,\mathcal C(m)$  such that 
		\be\label{eq:main}
			\lim_{z\to 0^-}\frac{N(z)}{|\log|z||}=\mathcal C(m).
		\ee
		Moreover,  $\mathcal C(m)$ satisfies  
		\begin{equation}\label{eq:C(m)}
			\mathcal{C}_1(m) \leq  \mathcal{C}(m) \leq   \mathcal{C}_2(m)    		\end{equation}
	where $\mathcal{C}_1(m)$ is the unique positive solution of the equation $F_m(x)=1$, with 
	\be
	F_m(x):= \frac{m+1}{\sqrt{m(m+2)}}\int_0^1 dy\,y \frac{\sinh\left( \f{2\pi}{3} x \arcsin\left(\frac{y}{m+1}\right)\right)}{\sinh\left( \f{\pi^2}{3} x \right)\cos\left(\arcsin\left(\frac{y}{m+1}\right)\right)}\,, 
	\ee
	and
	\begin{equation} \label{eq:C(m)_upper}
	\mathcal{C}_2(m) = \frac{1}{4\pi\beta(m) }\log\left( \sqrt{ \f{2\pi}{3} } \alpha(m) \right)(l_0^2(m) +3l_0(m)+2)
\end{equation}
with
\begin{equation}\label{eq:alphabeta}
	\alpha(m):=\frac{(m+1)^{3/2}}{2\sqrt{\pi}\sqrt{m(m+2)}}\log^{1/2}\left(1+\frac{2}{m}\right),\quad \beta(m):=\frac{\pi}{2}-\arcsin\left(\frac{1}{m+1}\right)
\end{equation}
and
$l_0(m)$ is the largest odd integer smaller than $\pi \alpha(m)^2 - 1/2$.
		
	\item For $m>m_*$ the number of negative eigenvalues of $H$ is finite.
\end{enumerate}
\end{theorem}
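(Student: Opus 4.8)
The plan is to adapt to the two‑fermion system the Birman--Schwinger scheme in symmetrized‑Faddeev form used by Sobolev. The first step (Section~4) is to recast the eigenvalue equation $H\psi=z\psi$, $z<0$, as a symmetrized Faddeev system $\Psi=\boldsymbol{A}(z)\Psi$, where $\boldsymbol{A}(z)$ is a $2\times2$ matrix of compact self‑adjoint operators on an auxiliary space, norm‑continuous in $z$ for $z\le0$ away from the threshold, the two entries corresponding to the resonant pair $(23)\simeq(31)$ and to the non‑resonant pair $(12)$. Together with the identity $N(z)=n(1,\boldsymbol{A}(z))$, where $n(1,\cdot)$ counts the eigenvalues larger than $1$, this reduces the theorem to the behaviour of $n(1,\boldsymbol{A}(z))$ as $z\to0^-$. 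The zero‑energy resonance of $h_{23}=h_{31}$ — that is, $1$ being a simple eigenvalue of $|v|^{1/2}g_0|v|^{1/2}$ with an eigenfunction $\varphi$ not orthogonal to $|v|^{1/2}$ — is exactly what makes $\boldsymbol{A}(z)$ lose compactness at $z=0$ and, potentially, makes $n(1,\boldsymbol{A}(z))$ diverge.

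The second step (Section~5) is to isolate the singular part. Using the low‑energy expansions of the two‑particle resolvents recalled in Section~3, I would write $\boldsymbol{A}(z)=\boldsymbol{A}_{\mathrm s}(z)+\boldsymbol{A}_{\mathrm r}(z)$ with $\boldsymbol{A}_{\mathrm r}(z)$ compact and norm‑continuous up to $z=0$ (in particular the whole $(12)$-component and the non‑resonant part of the $(23)$-$(31)$-component are swept into $\boldsymbol{A}_{\mathrm r}$), while $\boldsymbol{A}_{\mathrm s}(z)$ keeps only the rank‑one resonant factors built from $\varphi$ sandwiching the singular part of the free three‑body resolvent kernel. By the Weyl‑type bounds $n(s_1+s_2,X+Y)\le n(s_1,X)+n(s_2,Y)$, modifying $\boldsymbol{A}(z)$ by an operator that is compact and continuous at the threshold changes $n(1,\cdot)$ by only $O(1)$, so the asymptotics of $n(1,\boldsymbol{A}(z))$ equals that of $n(1\mp\varepsilon,\boldsymbol{A}_{\mathrm s}(z))$ for every $\varepsilon>0$. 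A finite chain of such reductions — localizing the Jacobi momentum to the resonant window $c\sqrt{|z|}\le|p|\le c$ and replacing the remaining kernels by their leading term, homogeneous of degree zero in $(p_{23},p_{31})$ through \eqref{eq:H_0(p)} — should reduce the question to the counting function $n(1,S_R)$ of an explicit operator $S_R$ with $R=|\log|z||\,(1+o(1))$, acting on $L^2\big((0,R)\big)\otimes L^2(S^2)$ restricted by the fermionic symmetry. Making this rigorous — controlling all the discarded operators in Hilbert--Schmidt or trace norm uniformly in $z\le0$ and proving their continuity at $z=0$ — is the technical core and the main obstacle.

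The third step (Section~6) is the direct analysis of $S_R$. Since the $v_\alpha$ are spherically symmetric, $S_R$ commutes with rotations and splits into angular‑momentum channels $\ell$; the antisymmetry $\psi(p_{23},p_{31})=-\psi(p_{31},p_{23})$, rewritten through \eqref{eq:p,q}, retains only the odd $\ell$, which is the analytic origin of the statement that the effect takes place in the odd waves. Passing to the logarithmic radial variable turns the $\ell$‑th channel into a truncated convolution operator on an interval of length $R$ with an explicit symbol $\lambda_\ell(\xi)$ built from $\varphi$ and the Jacobian of \eqref{eq:p,q}, even and strictly decreasing in $|\xi|$, with $\lambda_1(0)=\Lambda(m)$. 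The classical spectral asymptotics for truncated convolution operators then gives
\be
n(1,S_R)=\frac{R}{2\pi}\sum_{\ell\ \mathrm{odd}}(2\ell+1)\,\big|\{\xi:\lambda_\ell(\xi)\ge1\}\big|+o(R),
\ee
whence $N(z)/|\log|z||\to\mathcal{C}(m):=\frac1{2\pi}\sum_{\ell\ \mathrm{odd}}(2\ell+1)\,|\{\xi:\lambda_\ell(\xi)\ge1\}|$ (the $\varepsilon$‑shifts wash out because $\lambda_\ell$ is strictly monotone, hence has no plateau at height $1$). For $m>m_*$ one has $\Lambda(m)<1$, so $\lambda_\ell<1$ everywhere for all odd $\ell$, every level set is empty, $n(1,S_R)$ stays bounded, and therefore $N(z)$ is bounded: this is part~(ii). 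For $m<m_*$ one has $\Lambda(m)>1$, the $\ell=1$ level set is a nonempty interval, the sum is strictly positive, and \eqref{eq:main} holds with $\mathcal{C}(m)>0$.

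Finally, I turn to the two‑sided estimate \eqref{eq:C(m)}. Discarding all channels with $\ell\ge3$ — legitimate for a lower bound, since every term of the sum is non‑negative — leaves $\mathcal{C}(m)\ge\frac3{2\pi}\,|\{\xi:\lambda_1(\xi)\ge1\}|$, and computing the length of that interval from the equation $\lambda_1(\xi)=1$ identifies it with $\frac{2\pi}{3}\mathcal{C}_1(m)$, where $\mathcal{C}_1(m)$ is the positive root of $F_m(x)=1$; since $F_m$ is decreasing in $x$ with $F_m(0^+)=\Lambda(m)$, this root exists exactly when $m<m_*$, which is consistent with the vanishing of the effect at $m=m_*$. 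For the upper bound one dominates each symbol by an elementary expression in $\alpha(m)$ and $\beta(m)$, which shows that $\lambda_\ell(\xi)<1$ for all $\xi$ as soon as $\ell>l_0(m)$, so that only the finitely many odd $\ell\le l_0(m)$ contribute; estimating $|\{\xi:\lambda_\ell(\xi)\ge1\}|\le\beta(m)^{-1}\log\big(\sqrt{2\pi/3}\,\alpha(m)\big)$ and summing these with the identity $\sum_{\ell\ \mathrm{odd},\,\ell\le l_0}(2\ell+1)=\tfrac12(l_0^2+3l_0+2)$ yields $\mathcal{C}(m)\le\mathcal{C}_2(m)$ as in \eqref{eq:C(m)_upper}.
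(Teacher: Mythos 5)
Your proposal follows essentially the same route as the paper: the symmetrized Faddeev/Birman--Schwinger identity $N(z)=n(1,\boldsymbol{A}(z))$, extraction of the resonant rank-one singular part modulo compact remainders controlled by Weyl-type inequalities, reduction to the truncated convolution operator $S_R$ with only odd angular-momentum channels and symbols $\sqrt{2\pi}\,\hat S^{(l)}(k)$, and the identical lower/upper bound computations for $\mathcal C(m)$. The only point worth tightening is part (ii): emptiness of the level sets by itself only yields $N(z)=o(|\log|z||)$, and to obtain genuine finiteness you must invoke, as the paper does and as you implicitly intend, the quadratic-form bound $(S_Rf,f)\le\Lambda(m)\|f\|^2<\|f\|^2$ giving $n(1-\varepsilon,S_R)=0$ for $\varepsilon<1-\Lambda(m)$, combined with the $O(1)$ Weyl control of the discarded compact, threshold-continuous pieces.
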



\vs\vs
\section{Two-particle subsystems}
Here we recall some properties of the  two-body Hamiltonian operator $h$ in the position space  defined by~\eqref{hu},  under the hypothesis that the assumptions \ref{ass1},\ref{ass2},\ref{ass3}  and \ref{ass4} are satisfied. In particular, we are interested in the low energy behavior connected with the presence of zero-energy resonance (see definition~\ref{defre}). We first observe that 
if $\varphi$ is a square integrable solution of $|u|^{1/2}g_0|u|^{1/2}\varphi=\varphi$ then $\psi=g_0|u|^{1/2}\varphi$ satisfies $h\psi=0$ in the sense of distributions. Furthermore,   $\psi$ is an eigenfunction of $h$ with eigenvalue zero  if and only if $(|u|^{1/2},\varphi)=0$ (see, e.g., section 1 in \cite{Tam2}).  On the other hand,   if zero is a resonance for $h$ then there exists $\psi$ solution of $h\psi=0$ in the sense of distribution with $\psi\in L^2_\textup{loc}(\R^3)$ but $\psi \notin L^2(\R^3).$

\begin{remark}\label{rem:resonance}
We note that 	if $1$ is a multiple eigenvalue for $|u|^{1/2}g_0|u|^{1/2}$ then one can always find a $\varphi$, linear combination of eigenfunctions, such that $(u|^{1/2},\varphi)=0.$
\end{remark}

\begin{remark}\label{rem:s_wave}
The spherical symmetry of the potential $u$ implies that zero-energy resonance can only occur in \emph{s-wave} subspace.  Indeed, if $\varphi$ belongs to the subspace with angular momentum $l\geq 1$ then obviously $(|u|^{1/2},\varphi)=0.$
\end{remark}

\n
Let us introduce the resolvent of $h$ 

\be
r(z) = (h-z)^{-1}
\ee
which is  a bounded operator in $L^2(\R^3)$  for any  $z<0$ due to condition \ref{ass3},  the free resolvent 
\be
r_0(z) = \big(-\f{1}{2\mu} \Delta -z \big)^{-1} 
\ee
 (note that $r(0)=g_0$) and the operator

\be\label{w(z)}
	w(z)=I+|u|^{1/2}r(z)|u|^{1/2}. 
	\ee

\vsa

\n
Using the resolvent identity $r(z)=r_0(z)- r(z) \,u\, r_0(z)$ one verifies that

\be
w(z)=(I-|u|^{1/2}r_0(z)|u|^{1/2})^{-1}
\ee

\vsa
\n
The following lemma describes the behavior of the operator $w(z)$ in the case 1 is not an eigenvalue of $|u|^{1/2}g_0|u|^{1/2}.$
\begin{lemma}\label{lemma:no_res}
	Let us suppose that $1$ is not an eigenvalue of the operator $|u|^{1/2}g_0|u|^{1/2}.$ Then $w(z)$ is continuous in $z\leq 0.$
\end{lemma}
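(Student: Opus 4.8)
The plan is to show that $w(z) = (I - |u|^{1/2}r_0(z)|u|^{1/2})^{-1}$ is continuous on $z \le 0$ by combining (a) norm-continuity of $z \mapsto |u|^{1/2}r_0(z)|u|^{1/2}$ on $z \le 0$ with (b) invertibility of $I - |u|^{1/2}r_0(z)|u|^{1/2}$ at every such $z$, the only delicate point being $z = 0$. For step (a), recall that $r_0(z)$ has the explicit kernel $r_0(z)(x,x') = \frac{\mu}{2\pi} \frac{e^{-\sqrt{-2\mu z}\,|x-x'|}}{|x-x'|}$ for $z \le 0$, so that $|u|^{1/2}(r_0(z) - r_0(z'))|u|^{1/2}$ has kernel $|u(x)|^{1/2}\big(r_0(z)(x,x') - r_0(z')(x,x')\big)|u(x')|^{1/2}$. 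Since $|u(x)|^{1/2} \lesssim (1+|x|)^{-b/2}$ with $b > 3$ by assumption \ref{ass1}, this kernel is Hilbert–Schmidt (the $1/|x-x'|$ singularity is locally $L^2$ in $\R^3$ and the polynomial decay of $|u|^{1/2}$ controls the behavior at infinity uniformly for $z$ in compact subsets of $(-\infty,0]$, including a neighborhood of $0$ where $e^{-\sqrt{-2\mu z}|x-x'|} \le 1$). Dominated convergence then gives $\|\,|u|^{1/2}(r_0(z) - r_0(z'))|u|^{1/2}\,\|_{\mathrm{HS}} \to 0$ as $z \to z'$; in particular $|u|^{1/2}r_0(z)|u|^{1/2} \to |u|^{1/2}g_0|u|^{1/2}$ in Hilbert–Schmidt norm as $z \to 0^-$.

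For step (b), I would argue that $I - |u|^{1/2}r_0(z)|u|^{1/2}$ is invertible for each $z \le 0$. For $z < 0$ this is already asserted in the excerpt via the resolvent identity (it equals $w(z)^{-1}$ and $w(z)$ is a genuine bounded operator since $r(z)$ is bounded by condition \ref{ass3}). For $z = 0$, invertibility of $I - |u|^{1/2}g_0|u|^{1/2}$ is equivalent to $1$ not being an eigenvalue of the compact, self-adjoint operator $|u|^{1/2}g_0|u|^{1/2}$, which is precisely the hypothesis of the lemma; combined with self-adjointness, $1$ lies in the resolvent set, so $I - |u|^{1/2}g_0|u|^{1/2}$ has a bounded inverse. (One should also note $|u|^{1/2}g_0|u|^{1/2} \ge 0$, so that in fact its spectrum lies in $[0,\infty)$, but this is not needed — only that $1$ is excluded.)

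Finally I combine the two steps: if $A_n \to A$ in norm and $A$ is invertible, then $A_n$ is invertible for large $n$ and $A_n^{-1} \to A^{-1}$ in norm — the standard Neumann-series perturbation argument, $A_n^{-1} = (I + A^{-1}(A_n - A))^{-1}A^{-1}$ once $\|A^{-1}(A_n-A)\| < 1$. Applying this with $A_n = I - |u|^{1/2}r_0(z_n)|u|^{1/2}$ along any sequence $z_n \to z_0 \le 0$ yields $w(z_n) \to w(z_0)$, which is the claimed continuity. The main obstacle — really the only substantive point — is the uniform Hilbert–Schmidt bound near $z = 0$ in step (a): one must check that no divergence creeps in as the exponential factor degenerates to $1$, which is exactly where the short-range decay rate $b > 3$ enters (it guarantees $\int\int (1+|x|)^{-b}(1+|x'|)^{-b}|x-x'|^{-2}\,dx\,dx' < \infty$).
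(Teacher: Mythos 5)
Your argument is correct and is essentially the standard one: the paper itself omits the proof, deferring to Sobolev's Lemma, and that proof proceeds exactly as you do --- uniform Hilbert--Schmidt bounds and dominated convergence give norm continuity of $|u|^{1/2}r_0(z)|u|^{1/2}$ up to $z=0$ (using $b>3$), invertibility at $z=0$ follows from compactness, self-adjointness and the hypothesis that $1$ is not an eigenvalue, and the Neumann-series perturbation argument transfers continuity to the inverse. No gaps.
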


\n
Let us assume that  $h$ has a zero-energy resonance. Then we fix the following normalization condition on the eigenvector $\varphi$
\begin{equation}\label{eq:norm}
	(|u|^{1/2},\varphi)=(2\pi)^{3/2} ( \widehat{\varphi |u|^{1/2}})(0)=2^{1/4}\pi^{1/2}\mu^{-3/4}
\end{equation}
and we characterize the behavior of the  operator $w(z)$ when $z\rightarrow 0$.
\begin{lemma}\label{lemma:res}
	Suppose that $h$ has a zero-energy resonance.  If $z<0$ is small enough and $\delta<\text{min}\{1,b-3\}$ then
	\be
		w(z)=\frac{(\cdot,\varphi)\varphi}{|z|^{1/2}}+|z|^{-\frac{1-\delta}{2}}w^{(\delta)}(z)
	\ee
	where the operator $w^{(\delta)}(z)$ is  continuous in $z\leq 0.$ Moreover 
	\be
		(w(z))^{1/2}=\frac{(\cdot,\varphi)\varphi}{\norma{\varphi}|z|^{1/4}}+|z|^{-\frac{1-\delta}{4}}\tilde{w}^{(\delta)}(z)
	\ee
	where the operator $\tilde{w}^{(\delta)}(z)$ is  continuous in $z\leq 0.$
\end{lemma}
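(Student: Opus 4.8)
The plan is to start from the identity $w(z)=\bigl(I-|u|^{1/2}r_0(z)|u|^{1/2}\bigr)^{-1}$ recorded above, which is legitimate for $z<0$: indeed \ref{ass3} and \ref{ass4} together with the Birman--Schwinger principle force $0\le T(z):=|u|^{1/2}r_0(z)|u|^{1/2}<I$ for every $z<0$, so that $w(z)\ge I$ is a well-defined bounded positive operator. Following the scheme of \cite{Tam2}, I would (i) isolate the singular rank-one part of $T(z)$ as $z\to0$, (ii) invert $I-T(z)$ by a Feshbach reduction adapted to $\ker(I-T_0)=\mathrm{span}\,\varphi$ (here $T_0=|u|^{1/2}g_0|u|^{1/2}$ and $g_0=r_0(0)$), and (iii) deduce the expansion of $(w(z))^{1/2}$ from that of $w(z)$.

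For step (i), using the explicit kernel $r_0(z)(x,x')=\f{\mu}{2\pi|x-x'|}e^{-\sqrt{2\mu|z|}|x-x'|}$ and the elementary bound $0\le e^{-a}-1+a\le\min\{a^2/2,a\}\le a^{1+\delta}$ valid for $a\ge0$ and $0\le\delta\le1$, one writes
\[
 T(z)=T_0-c_\mu\,|z|^{1/2}\,(\cdot,|u|^{1/2})|u|^{1/2}+|z|^{(1+\delta)/2}\,B^{(\delta)}(z),\qquad c_\mu=\f{\mu^{3/2}}{\sqrt2\,\pi},
\]
where the kernel of $B^{(\delta)}(z)$ is dominated pointwise, uniformly in $z$, by $C\,|u(x)|^{1/2}|x-x'|^{\delta}|u(x')|^{1/2}$. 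By \ref{ass1} one has $|u|^{1/2}\in L^3(\R^3)$ and, for $\delta<\min\{1,b-3\}$, also $|\cdot|^{\delta}|u|^{1/2}\in L^3(\R^3)$; splitting $|x-x'|^{\delta}\le|x|^{\delta}+|x'|^{\delta}$ and using that the operator of multiplication by a function $g\in L^3(\R^3)$ composed with $(-\Delta)^{-1/2}$ is bounded on $L^2(\R^3)$ (H\"older together with the Sobolev embedding $\dot H^1\hookrightarrow L^6$), one then checks that $B^{(\delta)}(z)$ is bounded --- in fact compact --- uniformly for $z$ near $0$, is continuous in $z\le0$, and satisfies $B^{(\delta)}(0)=0$. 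This remainder estimate is the one genuinely delicate point of the proof: after the $|z|^{1/2}$ singularity is removed the kernel decays only like $|x-x'|^{\delta-1}$ off the diagonal, and it is precisely here that the short-range assumption \ref{ass1} with $b>3$ and the restriction $\delta<\min\{1,b-3\}$ are used.

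For step (ii), $T_0$ is compact, self-adjoint and non-negative with $\|T_0\|=1$ (the resonance) a simple, isolated eigenvalue with eigenvector $\varphi$; let $P$ be the orthogonal projection onto $\mathrm{span}\,\varphi$. For $|z|$ small the block $(I-P)(I-T(z))(I-P)$ is invertible with bounded inverse, continuous in $z$, and the Schur complement of $I-T(z)$ on $\mathrm{span}\,\varphi$ equals, in the basis $\varphi$, the scalar $\f{|z|^{1/2}}{\|\varphi\|^{2}}\bigl(1+|z|^{\delta/2}\sigma(z)\bigr)$ with $\sigma$ bounded and continuous in $z\le0$; the coefficient $\|\varphi\|^{-2}$ of $|z|^{1/2}$ is exactly the one for which the normalization \eqref{eq:norm} was chosen, since it gives $c_\mu\,|(|u|^{1/2},\varphi)|^{2}=1$. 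Inverting this scalar yields for the $PP$ block of $(I-T(z))^{-1}$ the expression $\f{(\cdot,\varphi)\varphi}{|z|^{1/2}}+|z|^{-\f{1-\delta}{2}}(\text{bounded, continuous})$, while the remaining three blocks are $O(1)$ and continuous in $z\le0$; reassembling,
\[
 w(z)=\f{(\cdot,\varphi)\varphi}{|z|^{1/2}}+|z|^{-\f{1-\delta}{2}}w^{(\delta)}(z),\qquad w^{(\delta)}(z)\ \text{bounded and continuous in}\ z\le0.
\]

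For step (iii), write $w(z)=\f{\|\varphi\|^{2}}{|z|^{1/2}}P_e+R(z)$ with $P_e=\f{(\cdot,\varphi)\varphi}{\|\varphi\|^{2}}$ the orthogonal projection onto $\varphi$ and $\|R(z)\|=O\bigl(|z|^{-(1-\delta)/2}\bigr)=o\bigl(|z|^{-1/2}\bigr)$. Since $w(z)\ge I$ its square root is well defined; decomposing it in $2\times2$ block form relative to $P_e,\,I-P_e$, writing the three equations obtained by squaring, and solving them by a fixed-point argument after rescaling each block by its expected order ($|z|^{-1/4}$, resp. $|z|^{-(1-\delta)/4}$) shows that the leading block is $\f{\|\varphi\|}{|z|^{1/4}}P_e=\f{(\cdot,\varphi)\varphi}{\|\varphi\|\,|z|^{1/4}}$ and the rescaled blocks are continuous up to $z=0$; equivalently, $|z|^{1/2}w(z)\to\|\varphi\|^{2}P_e$ in operator norm and the inequality $\|A^{1/2}-B^{1/2}\|\le\|A-B\|^{1/2}$ for $A,B\ge0$ control the error. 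This gives
\[
 (w(z))^{1/2}=\f{(\cdot,\varphi)\varphi}{\|\varphi\|\,|z|^{1/4}}+|z|^{-\f{1-\delta}{4}}\tilde w^{(\delta)}(z),\qquad \tilde w^{(\delta)}(z)\ \text{bounded and continuous in}\ z\le0,
\]
with $\tilde w^{(\delta)}(0)=(P_e^{\perp}w^{(\delta)}(0)P_e^{\perp})^{1/2}$. Beyond the remainder estimate of step (i), the Feshbach bookkeeping and the square-root computation are routine, the only care being to keep every error term $o$ of its leading term and continuous as $z\to0^-$.
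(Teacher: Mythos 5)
The paper does not actually prove this lemma (it refers to \cite{sobolev}), so I am comparing your argument with the standard one it points to. Your architecture is the right one, and steps (ii)--(iii) are essentially sound: the normalization check $c_\mu\,|(|u|^{1/2},\varphi)|^2=1$ is correct, and in step (iii) the inequality $\|A^{1/2}-B^{1/2}\|\le \|A-B\|^{1/2}$ applied with $B=\|\varphi\|^2|z|^{-1/2}P_e$ already yields the second expansion in one line, so the block fixed-point machinery is superfluous (your formula for $\tilde w^{(\delta)}(0)$ is also off: for $\delta$ strictly inside the admissible range both remainders tend to $0$ in norm).

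The genuine gap is in step (i). After the splitting $|x-x'|^{\delta}\le |x|^{\delta}+|x'|^{\delta}$, your dominating kernel is that of a sum of two rank-one operators $(\cdot,|u|^{1/2})\,|\cdot|^{\delta}|u|^{1/2}$ and its adjoint, which are bounded on $L^2(\R^3)$ if and only if $|\cdot|^{\delta}|u|^{1/2}\in L^2$, i.e.\ $\delta<(b-3)/2$; the $L^3$ membership you invoke, and the remark about $(-\Delta)^{-1/2}$, are beside the point because no singular kernel survives the subtraction. So for $3<b<5$ and $(b-3)/2\le\delta<\min\{1,b-3\}$ the dominating operator is unbounded (pair a test function near the origin with one spread over $1\le|x|\le R$), and in fact the uniform bound on $B^{(\delta)}(z)$ you assert is itself false there: writing $g=|u|^{1/2}f$, $\kappa=\sqrt{2\mu|z|}$, one has $(f,D(z)f)=-2\mu\kappa^2\int \bigl(|\hat g(p)|^2-|\hat g(0)|^2\bigr)p^{-2}(p^2+\kappa^2)^{-1}dp$ with $D(z)=T(z)-T_0+c_\mu|z|^{1/2}(\cdot,|u|^{1/2})|u|^{1/2}$, and the cross term $2\,\mathrm{Re}\,\overline{\hat g(0)}(\hat g(p)-\hat g(0))$ produces a rank-two piece of operator norm of order $\kappa^{1+(b-3)/2}$, not $\kappa^{1+\delta}$. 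The repair, which is what makes the lemma true in the stated range, is that your Feshbach reduction never needs the operator-norm expansion at rate $|z|^{(1+\delta)/2}$: the off-diagonal and $QQ$ blocks only require the crude bound $\|T(z)-T_0\|=O(|z|^{1/2})$ (valid for $b>3$, and already giving $\beta^* C^{-1}\beta=O(|z|)$), while the refined rate is needed only for the single scalar $(\varphi,(T_0-T(z))\varphi)$. There $|u|^{1/2}\varphi=|u|\,g_0|u|^{1/2}\varphi=O(\langle x\rangle^{-b})$ because the resonance function $g_0|u|^{1/2}\varphi$ is bounded, and this extra decay makes its Fourier transform H\"older of any order $\theta<\min\{1,b-3\}$, giving $(\varphi,D(z)\varphi)=O(|z|^{(1+\delta)/2})$ in the full range. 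As written, your proof establishes the lemma only for $\delta<\min\{1,(b-3)/2\}$ (which, incidentally, would still be enough for the rest of the paper).
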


\n
For the proof of the previous lemmas we refer the reader to \cite{sobolev}.

\begin{remark}\label{remfer}
\n
Let us consider the Hamiltonians of the two-body subsystems defined in \eqref{halfa}.  We note that the operator $h_{12}$, due to the symmetry constraint, acts on 
\begin{equation}
L_{\textup{asym}}^2(\R^3)=\{\psi\in L^2(\R^3)\,|\, \psi(-x)=-\psi(x)\}.
\end{equation}
 Hence, by remark~\ref{rem:s_wave}, it cannot have zero energy resonance because its domain does not include s-wave functions. Moreover, using the assumptions \ref{ass2} and \ref{ass4}, $h_{12}$ cannot have zero as an eigenvalue (see e.g. \cite{yafaev_virt}). By  remark~\ref{rem:resonance}, we conclude that 1 is not an eigenvalue for the operator $|v_{12}|^{1/2}g_0|v_{12}|^{1/2}.$ Then  we can  apply  lemma~\ref{lemma:no_res} and, as we will see,  this implies that the potential between the two fermions does not play any role in the proof of the Efimov effect.

\n
On the other hand, as we have already underlined, the presence of a zero-energy resonance for  $h_{23}=h_{31}$  is a crucial ingredient of  the proof.
\end{remark}

\vs\vs

\section{Faddeev equations}
	Our proof of the occurrence of the Efimov effect is based on the analysis of the Faddeev equations (\cite{faddeev}), which are the three-body analogous of the Birman-Schwinger equation for the one body problem. For the convenience of the reader,  we recall here the derivation following the clear and simple presentation contained in  \cite{motovilov}. 

\n
Let us consider the eigenvalue equation for $H$
\be\label{eigH}
\big(H_0 + \sum_{\alpha} V_{\alpha} \big) \psi = z \, \psi\,, \;\;\;\;\;\;\;\;z<0
\ee
Note that the free resolvent
\be
R_0(z)= (H_0 -z)^{-1}
\ee

\vsa
\n
is bounded in $\mathcal H$ for any $z<0$, so we can equivalently write 
\be\label{LS}
\psi = - R_0(z) \sum_{\alpha} V_{\alpha} \psi
\ee
We decompose $\psi$ in the Faddeev components 
\be
\psi = \sum_{\alpha} \eta_{\alpha}
\ee
where, for each pair $\alpha$, from \eqref{LS} we have 
\be\label{eq:defetaa}
\eta_{\alpha} = - R_0(z)  V_{\alpha} \psi
\ee
In order to find the equations for $\eta_{\alpha}$, we introduce the following operators acting in the Hilbert space $L^2(\R^6)$
\be
H_{\alpha}=H_0 + V_{\alpha}
\ee
i.e., the Hamiltonian of the three-particle system with the interactions between the pairs $\beta$, with $\beta \neq \alpha$, removed, and its resolvent 

\be
R_{\alpha}(z)= (H_{\alpha} - z)^{-1}
\ee
which is bounded in $L^2(\R^6)$ for any $z<0$ by our assumptions on the potentials. Then we rewrite \eqref{eq:defetaa} in the form
\be
\eta_{\alpha}= - R_0(z) V_{\alpha} \sum_{\beta} \eta_{\beta}
\ee
or 
\be
\eta_{\alpha} +R_0(z) V_{\alpha} \eta_{\alpha}= - R_0(z) V_{\alpha} \sum_{\beta \neq \alpha} \eta_{\beta}
\ee
Applying the operator $R_{\alpha}(z) (H_0 -z)$ to both sides of the above equation, we obtain the Faddeev equations
\be\label{faddeq} 
\eta_{\alpha}= - R_{\alpha} (z) V_{\alpha} \sum_{\beta \neq \alpha} \eta_{\beta} 
\ee
Thus, we conclude that if $\psi$ is a solution of \eqref{eigH},  then $\psi = \sum_{\alpha} \eta_{\alpha}$ and $\eta_{\alpha}$ are solutions of \eqref{faddeq}. The converse is also true, i.e., if $\eta_{\alpha}$ are solutions of \eqref{faddeq},  then $\psi = \sum_{\alpha} \eta_{\alpha}$ is a solution of \eqref{eigH}. 
As it is well known, a suitable iterated form of Faddeev equations is characterized by a compact operator and this is the main advantage of Faddeev equations with respect to equation~\eqref{LS}.
	
\n
In the above derivation  we have not used the symmetry property of our system and therefore  it is valid for a generic three-particle system. In order to take into account of the fermionic symmetry we proceed as follows.

\n
For notational convenience we describe the symmetry by an operator $T$  on $L^2(\R^6)$. In the coordinates $(k_{23}, p_{23})$, $T$ is defined by
\begin{equation}
	(T\psi) (k_{23},p_{23})=-\psi\left(\frac{k_{23}}{m+1}-\frac{m(m+2)}{(m+1)^2}p_{23},-k_{23}-\frac{p_{23}}{m+1}\right)
\end{equation}
and then  we rewrite
\begin{equation}\label{eq:Hil2}
	\mathcal{H}=\big\{\psi\in L^2(\R^6)\;|\;  T\psi=\psi \big\}.
\end{equation}
In the coordinates $(p_{23},p_{31})$ and $(k_{12},p_{12})$ we have 
\vsa
\begin{equation}
	(T\psi) (p_{23},p_{31})=-\psi(p_{31},p_{23})\,, \;\;\;\;\;\;\;\;\;\; (T\psi) (k_{12},p_{12})=-\psi(-k_{12},p_{12}).
\end{equation}

\n
A direct computation shows that $T$ commutes with $H_0$ and $V_{12}$ and it satisfies $TV_{23}=V_{31}T$ and $TV_{31}=V_{23}T.$  Indeed, recalling the expression of $H_0$ in the coordinates $(p_{23},p_{31})$ given in \eqref{eq:H_0(p)}, we  immediately get $TH_0=H_0T.$  By \eqref{intfour} and assumption \ref{ass2}, one also obtains $TV_{12}=V_{12}T.$ 
Finally, using the equations \eqref{eq:p,q} and  assumption \ref{ass2} we  write
\begin{align*}
	(V_{23}\psi) (p_{23},p_{31})&=\frac{1}{(2\pi)^{3/2})}\int dp\, \hat{v}(p-p_{31})\psi(p_{23},p)\\
	(V_{31}\psi) (p_{23},p_{31})&=\frac{1}{(2\pi)^{3/2}}\int dp\, \hat{v}(p-p_{23})\psi(p,p_{31})
\end{align*}
which  imply $TV_{23}=V_{31}T$ and $TV_{31}=V_{23}T.$  

\n
Using the above properties,  from the definition of $\eta_\alpha$ given in \eqref{eq:defetaa} we obtain $T\eta_{12}=\eta_{12}$, $\eta_{31}=T\eta_{23}$ and the system \eqref{faddeq} reduces to 
\be\label{faddeq2}
	\left\{
	\begin{aligned}
		\eta_{23}&=-R_{23}(z)V_{23}(T\eta_{23}+\eta_{12})\\
		\\
		\eta_{12}&=-R_{12}(z)V_{12}(I+T)\eta_{23}
	\end{aligned}
	\right.
\ee
where $\eta_{23}\in L^2(\R^6)$ and $\eta_{12}\in\mathcal{H}.$ Consequently, in our fermionic system the solution of the eigenvalue equation~\eqref{eigH} reads $\psi= \eta_{12} + (I + T) \,\eta_{23}$. 

\vsa

\n
In this paper we find convenient to use a symmetrized form of the Faddeev equations similar to the one used in 	 \cite{sobolev}. In order to derive such equations, we first introduce  the following bounded and positive operators on $L^2(\R^6)$ for $z<0$
\be
W_\alpha (z)=I+|V_\alpha|^{1/2}R_\alpha(z)|V_\alpha|^{1/2}.
\ee
Using the resolvent identity $R_{\alpha}(z)= R_0(z)- R_0(z) V_{\alpha} R_{\alpha}(z)$,  we find 
\be\label{eq:W}
	W_\alpha(z)=\left(I-|V_{\alpha}|^{1/2}R_0(z)|V_\alpha|^{1/2}\right)^{-1}
\ee

\vsa

\n
Moreover, we define the resolvent of $h_{\alpha}$
\be
r_{\alpha} (z) = (h_{\alpha}-z)^{-1}
\ee

\vsa

\n
which is  a bounded operator in $L^2(\R^3)$  for any  $z<0$, the bounded and positive operator

\be\label{walpha(z)}
	w_{\alpha}(z)=I+|v_{\alpha}|^{1/2}r_{\alpha}(z)|v_{\alpha}|^{1/2}
	\ee

\vsa
\n
and the Fourier transform $\four_\alpha$   with respect to $x_\alpha$
\be
(\four_{\alpha} f) (k_{\alpha}, p_{\alpha}) = \f{1}{(2 \pi)^{3/2}} \int \! d x_{\alpha}  \, e^{-i k_{\alpha}\cdot x_{\alpha}} \, f(x_{\alpha}, p_{\alpha}).
\ee

\n
Then one verifies that
\be\label{eq:Ww}
W_\alpha(z)=\four_\alpha w_\alpha\left(z-\frac{p^2_\alpha}{2n_\alpha}\right)\four_\alpha^*.
\ee

\vsa
\n
Let us  reconsider the first equation in \eqref{faddeq2}.  Using the resolvent identity and taking into account that $V_{\alpha}=-|V_{\alpha}|$, we have 
\begin{equation}\label{eq:eta23}
	\begin{aligned}
	\eta_{23}&=-\big(R_0(z)-R_0(z)V_{23}R_{23}(z) \big)  V_{23}(T\eta_{23}+\eta_{12})\\
					 &=\big( R_0(z)+R_0(z)|V_{23}|R_{23}(z) \big)  |V_{23}|(T\eta_{23}+\eta_{12})\\
					 &=R_0(z)|V_{23}|^{1/2}W_{23}(z)^{1/2}W_{23}(z)^{1/2}|V_{23}|^{1/2}(T\eta_{23}+\eta_{12})
	\end{aligned}
\end{equation}
Analogously, the second equation in \eqref{faddeq2} can be rewritten as
\be\label{eq:eta12}
	\eta_{12}=R_0(z)|V_{12}|^{1/2}W_{12}(z)^{1/2}W_{12}(z)^{1/2}|V_{12}|^{1/2}(I+T)\eta_{23}.
\ee

\vsa
\n
Now we apply $T$ to both sides of \eqref{eq:eta23} and sum it with \eqref{eq:eta12}. On the resulting equation we apply $W_{23}^{1/2}|V_{23}|^{1/2}.$ Denoting
\be
	\begin{aligned}
		\psi_{23}&= W_{23} (z)^{1/2} |V_{23}|^{1/2} (T\eta_{23}+\eta_{12})\\
		\psi_{12}&= W_{12}(z)^{1/2}|V_{12}|^{1/2}(I+T)\eta_{23}
	\end{aligned}
\ee
we find 

\vsa
\begin{equation}
	\psi_{23}=W_{23}(z)^{1/2}|V_{23}|^{1/2}R_0(z) \, \big(T|V_{23}|^{1/2}W_{23}(z)^{1/2}\psi_{23}+|V_{12}|^{1/2}W_{12}(z)^{1/2}\psi_{12} \big).
\end{equation}

\vsa
\n
On the other hand, applying $W_{12}(z)^{1/2}|V_{12}|^{1/2}(I+T)$ to both sides of \eqref{eq:eta23} we find
\begin{equation}
	\psi_{12}=W_{12}(z)^{1/2}|V_{12}|^{1/2}R_0(z)(I+T)|V_{23}|^{1/2}W_{23}(z)^{1/2}\psi_{23}.
\end{equation}
\vsa
\n
Hence we have the following symmetrized form of the Faddeev equations for our model of two identical fermions and a different particle
\be
	\Psi = \boldsymbol{A} (z) \Psi
\ee
where $\Psi = (\psi_{23}, \psi_{12})$ and $\boldsymbol{A}(z)$ is a $2\times 2$ matrix operator  acting on the space
\begin{equation}\label{eq:K}
	\mathcal{K}=L^2(\R^6)\times \mathcal{H}
\end{equation}
 defined by
\begin{equation}\label{eq:def_A}
	\boldsymbol{A}(z)=\boldsymbol{W}(z)^{1/2}\boldsymbol{U}(z)\boldsymbol{W}(z)^{1/2} ,
\end{equation}
with
\be
	\boldsymbol{W}(z)^{1/2}=\diag\{W_{23}(z)^{1/2},W_{12}(z)^{1/2}\}
\ee
and
\be
	\boldsymbol{U}(z)=
	\begin{pmatrix}
		|V_{23}|^{1/2}R_0(z)T|V_{23}|^{1/2} & |V_{23}|^{1/2}R_0(z)|V_{12}|^{1/2}\\
		\\
		|V_{12}|^{1/2}R_0(z)(I+T)|V_{23}|^{1/2} & 0
	\end{pmatrix}.
\ee

\vsa
\n
The advantage of such symmetrized form of the  equations is the fact that  $\boldsymbol{A}(z)$ is compact for $z<0$. 

\begin{theorem}\label{th:A_comp}
	For $z<0$ the operator $\boldsymbol{A}(z)$ is  compact and it is continuous in $z.$
\end{theorem}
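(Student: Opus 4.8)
The plan is to establish compactness and continuity of $\boldsymbol{A}(z) = \boldsymbol{W}(z)^{1/2}\boldsymbol{U}(z)\boldsymbol{W}(z)^{1/2}$ for $z<0$ by splitting the work into a statement about the diagonal factors $\boldsymbol{W}(z)^{1/2}$ and a statement about the entries of $\boldsymbol{U}(z)$. For the diagonal factors, I would invoke the identity \eqref{eq:Ww}, $W_\alpha(z) = \four_\alpha\, w_\alpha(z - p_\alpha^2/(2n_\alpha))\four_\alpha^*$, together with the spectral results of Section~3. Under assumptions \ref{ass1}--\ref{ass4}: for $\alpha=12$ Lemma~\ref{lemma:no_res} applies (by Remark~\ref{remfer}, $1$ is not an eigenvalue of $|v_{12}|^{1/2}g_0|v_{12}|^{1/2}$), so $w_{12}(\zeta)$ is continuous and uniformly bounded for $\zeta \le 0$; for $\alpha = 23 = 31$ Lemma~\ref{lemma:res} gives $w_{23}(\zeta) = (\cdot,\varphi)\varphi\,|\zeta|^{-1/2} + |\zeta|^{-(1-\delta)/2} w^{(\delta)}_{23}(\zeta)$, and since $\zeta = z - p^2/(2n) \le z < 0$ the singular prefactors $|\zeta|^{-1/2}$ and $|\zeta|^{-(1-\delta)/2}$ stay bounded (by $|z|^{-1/2}$, etc.), so $W_{23}(z)^{1/2}$ is a bounded operator, continuous in $z<0$ in operator norm. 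Thus $\boldsymbol{W}(z)^{1/2}$ is bounded and norm-continuous on $\mathcal{K}$ for $z<0$.

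Next I would prove that each entry of $\boldsymbol{U}(z)$ is compact and norm-continuous in $z<0$. The building blocks are operators of the form $|V_\beta|^{1/2} R_0(z)\, G\, |V_\gamma|^{1/2}$ with $G \in \{I, T\}$, $\beta,\gamma \in \{12,23,31\}$. Passing to momentum space via \eqref{intfour}, the kernel of $|V_\beta|^{1/2} R_0(z) |V_\gamma|^{1/2}$ (or of $|V_\beta|^{1/2} R_0(z) T |V_\gamma|^{1/2}$) is, schematically, $\widehat{|v_\beta|^{1/2}}$ convolved, in the appropriate Jacobi variable, against $(H_0 - z)^{-1}$ evaluated at momenta related by the linear change of variables \eqref{eq:k,p}, against $\widehat{|v_\gamma|^{1/2}}$; the operator $T$ only permutes arguments by the linear map of \eqref{eq:p,q}. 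Condition \ref{ass1} with $b>3$ gives $\widehat{|v_\alpha|^{1/2}} \in L^2(\R^3)$ with enough decay, and the free resolvent kernel $(p_{23}^2/(2\mu) + p_{31}^2/(2\mu) + p_{23}\cdot p_{31}/m - z)^{-1}$ is, for $z<0$, bounded, smooth, and decaying at infinity (the quadratic form $H_0$ is positive definite since $1/\mu^2 > 1/m^2$ for $m>0$, because $\mu = m/(m+1) < m$). A Schur-test / Hilbert--Schmidt estimate then shows these kernels are square integrable on $\R^6\times\R^6$, hence the operators are Hilbert--Schmidt, and the $z$-dependence enters only through the resolvent kernel, which is norm-continuous (indeed analytic) in $z<0$ by dominated convergence on the Hilbert--Schmidt norm. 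Assembling the $2\times2$ matrix, $\boldsymbol{U}(z)$ is compact and norm-continuous on $\mathcal{K}$ for $z<0$.

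Finally, $\boldsymbol{A}(z) = \boldsymbol{W}(z)^{1/2}\boldsymbol{U}(z)\boldsymbol{W}(z)^{1/2}$ is a product of a bounded norm-continuous operator, a compact norm-continuous operator, and a bounded norm-continuous operator; compactness of the product and joint norm-continuity in $z<0$ follow immediately from the ideal property of compact operators and the sub-multiplicativity of the operator norm.

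\textbf{Main obstacle.}
The delicate point is not compactness per se but obtaining the Hilbert--Schmidt bound on the off-diagonal entries of $\boldsymbol{U}(z)$ uniformly enough to deduce norm-continuity, because the operator $T$ mixes the $(k_{23},p_{23})$ and $(k_{31},p_{31})$ coordinate systems: one must carefully track how the linear substitutions in \eqref{eq:k,p}--\eqref{eq:p,q} act on the resolvent kernel and verify that, after the change of variables, the combination $\widehat{|v_{23}|^{1/2}}(\,\cdot\,) \cdot (H_0-z)^{-1} \cdot \widehat{|v_{12}|^{1/2}}(\,\cdot\,)$ still decays fast enough in all six variables. Here the decay from \ref{ass1} ($b>3$, so $\widehat{|v_\alpha|^{1/2}}$ decays faster than $|k|^{-3/2}$ up to logarithms, in fact is in $L^2$) is exactly what is needed; a borderline case would force a more careful argument, e.g. splitting the resolvent into a singular near-diagonal piece and a smooth remainder, but with the stated hypotheses a direct Hilbert--Schmidt estimate should suffice.
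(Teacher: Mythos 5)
Your overall architecture is the right one (and is the one behind the reference the paper cites for this theorem): show $\boldsymbol{W}(z)^{1/2}$ is bounded and norm-continuous for $z<0$, show the entries of $\boldsymbol{U}(z)$ are compact and norm-continuous, and conclude by the ideal property. The treatment of $\boldsymbol{W}(z)^{1/2}$ via \eqref{eq:Ww} and Lemmas~\ref{lemma:no_res}, \ref{lemma:res} is fine, since the fiber argument $\zeta=z-p_\alpha^2/(2n_\alpha)\le z<0$ keeps you away from the resonance singularity. The gap is in the step you yourself flag as the main obstacle: the entries of $\boldsymbol{U}(z)$ are \emph{not} Hilbert--Schmidt, so the ``direct Hilbert--Schmidt estimate'' does not close. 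Concretely, write $f=\widehat{|v|^{1/2}}$ and use the $(p_{23},p_{31})$ representation, in which $|V_{23}|^{1/2}$ convolves in $p_{31}$ and $|V_{31}|^{1/2}$ convolves in $p_{23}$ (and $T|V_{23}|^{1/2}=|V_{31}|^{1/2}T$ with $T$ bounded). The kernel of $|V_{23}|^{1/2}R_0(z)|V_{31}|^{1/2}$ is then proportional to $f(p_{31}-t)\,(H_0(p_{23},t)-z)^{-1}f(p_{23}-s)$, and its squared Hilbert--Schmidt norm factorizes as
\begin{equation*}
\frac{\|f\|_2^4}{(2\pi)^6}\int_{\R^6} dp_{23}\,dt\;\bigl(H_0(p_{23},t)-z\bigr)^{-2}=+\infty ,
\end{equation*}
because $H_0(P)\asymp |P|^2$ on $\R^6$ and $\int_{|P|>1}|P|^{-4}\,dP$ diverges in six dimensions. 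The decay of $f$ is entirely consumed by integrating out the two convolution variables and cannot compensate the non-integrability of the squared resolvent; no strengthening of \ref{ass1} changes this, so this is not a ``borderline case that should be fine'' but the generic situation.

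The fix is standard and cheap, but it is a different argument from the one you wrote: exhibit the entry as a norm limit of Hilbert--Schmidt operators. For instance, insert the spectral cutoff $\chi_{\{H_0\le n\}}$ into the resolvent: the truncated operator has squared Hilbert--Schmidt norm $\|f\|_2^4(2\pi)^{-6}\int_{\{H_0\le n\}}(H_0-z)^{-2}<\infty$ (now the momentum integration is over a bounded set), while the remainder is $|V_{23}|^{1/2}\bigl[(H_0-z)^{-1}\chi_{\{H_0>n\}}\bigr]|V_{31}|^{1/2}$, whose operator norm is at most $\|v\|_\infty (n+|z|)^{-1}\to 0$. Compactness of each entry for fixed $z<0$ follows, and norm-continuity in $z$ follows from the resolvent identity together with boundedness of $|V_\beta|^{1/2}$ and $\|R_0(z)-R_0(z')\|\le |z-z'|\,|zz'|^{-1}$. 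Note also that this mechanism only works for the ``mixed'' entries $\beta\ne\gamma$ (after commuting $T$ through): the same-pair operator $|V_{23}|^{1/2}R_0(z)|V_{23}|^{1/2}$ carries a $\delta(p_{23}-p_{23}')$ and is not compact, which is precisely why the symmetrized Faddeev form, whose $\boldsymbol{U}(z)$ contains only mixed terms, is used. Your proposal implicitly relies on this but should make it explicit.
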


\n
The proof of theorem~\ref{th:A_comp} goes exactly as that of  theorem 4.1 in \cite{sobolev} and it is omitted.

\n
It turns out that the number of eigenvalues of $H$ smaller than $z<0$ equals  the number of  eigenvalues of $\boldsymbol{A}(z)$ larger than 1. In order to prove this fact it is useful to introduce the following definition.
\begin{definition}
	Let $B$ be a selfadjoint operator on the Hilbert space $\mathfrak{h}$ and let $\lambda\in\R.$ We set
	\begin{equation}
		n(\lambda,B)=\sup_{\mathfrak{h}_B(\lambda)}\dim \mathfrak{h}_B(\lambda)
	\end{equation}
	where $\mathfrak{h}_B(\lambda)$ denotes any subspace of $\mathcal{D}(B)$ such that if $f$ belongs to $\mathfrak{h}_B(\lambda)$ then $(Bf,f)>\lambda\norma{f}^2.$
\end{definition}

	\n
We stress that if the spectrum of the operator $B$ on the right of $\lambda$ is purely discrete then $n(\lambda,B)$ coincides with the number of eigenvalues (with multiplicities) on the right of $\lambda.$ Thus in particular 
\be
N(z)=n(-z,-H).
\ee

\vsa
\n
Due to the compactness of $\boldsymbol{A}(z)$ stated in theorem~\ref{th:A_comp}, we also have that $n(1,\boldsymbol{A}(z))$ equals the number of eigenvalues of $\boldsymbol{A}(z)$ larger than $1.$

\n


\n
In the next theorem we prove a ``Birman-Schwinger Principle'' for our three-particle system, which is crucial for our analysis.
\begin{theorem}\label{th:fadd} For $z<0$ we have 
	\begin{equation}\label{eq:N(z)}
		N(z)=n(1,\boldsymbol{A}(z)).
	\end{equation}
\end{theorem}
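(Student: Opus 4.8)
The plan is to run the Birman--Schwinger counting argument in the symmetrized Faddeev form of \cite{sobolev}, adapted to the present fermionic $2\times 2$ setting. It rests on two facts: \emph{(i)} for every $z_0<0$ the map $\psi\mapsto\Psi=(\psi_{23},\psi_{12})$ built in Section~4 is a linear isomorphism between $\ker(H-z_0)$ and $\ker(\boldsymbol{A}(z_0)-I)$, so that $1\in\sigma(\boldsymbol{A}(z_0))$ precisely when $z_0$ is an eigenvalue of $H$, with equal multiplicity; \emph{(ii)} $z\mapsto\boldsymbol{A}(z)$ is a norm-continuous family of compact self-adjoint operators (Theorem~\ref{th:A_comp}) with $\|\boldsymbol{A}(z)\|\to0$ as $z\to-\infty$, each of whose eigenvalue branches crosses the value $1$ \emph{strictly increasingly} in $z$. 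Granted (i) and (ii), \eqref{eq:N(z)} follows by the elementary counting below.

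\emph{Proof of (i).} Recall $N(z)=n(-z,-H)$ and that, by the HVZ theorem and \ref{ass4}, $\inf\sigma_{\mathrm{ess}}(H)\geq0$, so every $z_0<0$ is either in the resolvent set of $H$ or an eigenvalue of finite multiplicity; likewise $1$ is either in the resolvent set of the compact operator $\boldsymbol{A}(z_0)$ or an eigenvalue of finite multiplicity. If $H\psi=z_0\psi$ then the components $\eta_\alpha=-R_0(z_0)V_\alpha\psi$ solve \eqref{faddeq2} (with $\eta_{31}=T\eta_{23}$, $\eta_{12}\in\mathcal{H}$), and by the computation that produced \eqref{eq:def_A} the pair $\Psi=(\psi_{23},\psi_{12})\in\mathcal{K}$ solves $\Psi=\boldsymbol{A}(z_0)\Psi$. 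Conversely, from $\Psi=\boldsymbol{A}(z_0)\Psi$ one sets $\eta_{23}=R_0(z_0)|V_{23}|^{1/2}W_{23}(z_0)^{1/2}\psi_{23}$, $\eta_{12}=R_0(z_0)|V_{12}|^{1/2}W_{12}(z_0)^{1/2}\psi_{12}$, $\psi=\eta_{12}+(I+T)\eta_{23}$; using the resolvent identity in the form $R_0(z)\bigl(I+|V_\alpha|R_\alpha(z)\bigr)=R_\alpha(z)$ one checks, reversing the Section~4 manipulations, that $\psi\in\mathcal{H}$, $H\psi=z_0\psi$, and $\eta_\alpha=-R_0(z_0)V_\alpha\psi$. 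The two constructions are therefore mutually inverse (in particular $\psi=0\Leftrightarrow\Psi=0$), and since both kernels are finite dimensional, $\dim\ker(H-z_0)=\dim\ker(\boldsymbol{A}(z_0)-I)$.

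\emph{The counting.} Since $\|R_0(z)\|=|z|^{-1}$ and $\|W_\alpha(z)\|$ stays bounded as $z\to-\infty$ (by \ref{ass1}--\ref{ass4}), we have $\|\boldsymbol{A}(z)\|\to0$ as $z\to-\infty$; together with the boundedness from below of $H$, this gives $n(1,\boldsymbol{A}(z))=0=N(z)$ for $z$ small enough. On any interval carrying no eigenvalue of $H$, the number $1$ stays in the resolvent set of $\boldsymbol{A}(z)$, so by norm-continuity the (finite) number of eigenvalues of $\boldsymbol{A}(z)$ exceeding $1$ does not change there; hence $n(1,\boldsymbol{A}(z))$ is locally constant off the eigenvalues of $H$. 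Near an eigenvalue $z_0<0$ of $H$ of multiplicity $k$, perturbation theory provides $k$ continuous eigenvalue branches $\mu_1(z),\dots,\mu_k(z)$ of $\boldsymbol{A}(z)$ with $\mu_j(z_0)=1$; by (i) they differ from $1$ for $z\neq z_0$, and by the monotonicity in (ii) they cross $z_0$ increasingly, so each lies below $1$ for $z<z_0$ and above $1$ for $z>z_0$. Thus $n(1,\boldsymbol{A}(z))$ jumps by exactly $k$ at $z_0$. Adding up these jumps from $-\infty$ to $z$ (and noting that the eigenvalue $1$ of $\boldsymbol{A}(z)$ itself is excluded from $n(1,\boldsymbol{A}(z))$, exactly as $z$ is excluded from $N(z)$) yields $n(1,\boldsymbol{A}(z))=\#\{\text{eigenvalues of }H\text{ below }z\}=N(z)$.

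\emph{The main obstacle} is the strict monotonicity in (ii), i.e.\ $\mu_j'(z_0)>0$. I would prove it by a Feynman--Hellmann argument. Writing the eigenvalue equation $\boldsymbol{A}(z)\Psi=\mu(z)\Psi$ as $\boldsymbol{U}(z)\Phi=\mu(z)\boldsymbol{W}(z)^{-1}\Phi$ with $\Phi=\boldsymbol{W}(z)^{1/2}\Psi$, one uses the crucial structural fact that $\boldsymbol{M}(z):=\boldsymbol{W}(z)^{-1}-\boldsymbol{U}(z)$ depends on $z$ \emph{only} through $R_0(z)$ --- indeed $W_\alpha(z)^{-1}=I-|V_\alpha|^{1/2}R_0(z)|V_\alpha|^{1/2}$ by \eqref{eq:W} --- so that $\boldsymbol{M}'(z)$ is minus an operator built linearly from $R_0(z)^2=R_0(z)^{*}R_0(z)$. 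Differentiating and pairing with the null vector gives $\mu_j'(z_0)=\langle-\boldsymbol{M}'(z_0)\Phi_j,\Phi_j\rangle/\langle\boldsymbol{W}(z_0)^{-1}\Phi_j,\Phi_j\rangle$; the denominator is positive because $\boldsymbol{W}(z_0)^{-1}>0$, while the numerator telescopes --- using that $\Phi_j$ corresponds to the $H$-eigenvector $\psi_j$, that $R_0(z_0)^{-1}\psi_j=\sum_\alpha|V_\alpha|\psi_j$, and the fermionic symmetry $T\psi_j=\psi_j$ --- to a positive multiple of $\|\psi_j\|^2$. This is the three-particle counterpart of the one-body identity $\tfrac{d}{dz}\bigl(|V|^{1/2}R_0(z)|V|^{1/2}\bigr)=|V|^{1/2}R_0(z)^2|V|^{1/2}\geq0$ underlying the classical Birman--Schwinger principle; the remaining (routine but careful) justification of the perturbation step is as in \cite{sobolev}. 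This completes the plan.
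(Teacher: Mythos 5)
Your strategy (locate the eigenvalues of $H$ as the parameter values where $1$ enters the spectrum of $\boldsymbol{A}(z)$, then count monotone crossings of the branches starting from $n(1,\boldsymbol{A}(z))=0$ at $z\to-\infty$) is genuinely different from the paper's. The paper never uses continuity or monotonicity in $z$ at all: for each \emph{fixed} $z<0$ it proves the identity through a purely variational chain, $N(z)=n(1,R_0(z)^{1/2}|V|R_0(z)^{1/2})=n(1,\boldsymbol{L}(z))=n(1,\boldsymbol{S}^*(z)\boldsymbol{S}(z))=n(1,\tilde{\boldsymbol{A}}(z))=n(1,\boldsymbol{A}(z))$, using only the max--min characterization of $n(\lambda,\cdot)$, the identity $n(\lambda,BB^*)=n(\lambda,B^*B)$, the substitution $y=(I-\boldsymbol{D}_1)^{1/2}x$, and a diagonal similarity. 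That route avoids every analytic difficulty your plan must confront (degenerate perturbation theory, local constancy, the $z\to-\infty$ limit).

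The genuine gap in your proposal is step \emph{(ii)}, which carries the whole argument. First, $\boldsymbol{A}(z)$ is \emph{not} self-adjoint on $\mathcal{K}$: the adjoint of the $(1,2)$ entry of $\boldsymbol{U}(z)$, computed as a map into $\mathcal{H}$, is $\tfrac12|V_{12}|^{1/2}R_0(z)(I+T)|V_{23}|^{1/2}$, i.e.\ half the $(2,1)$ entry; only the conjugate $\tilde{\boldsymbol{A}}(z)=\diag\{1,\tfrac{\sqrt2}{2}\}\boldsymbol{A}(z)\diag\{1,\sqrt2\}$ is self-adjoint. Consequently the single-vector Feynman--Hellmann quotient $\mu_j'=\langle-\boldsymbol{M}'\Phi_j,\Phi_j\rangle/\langle\boldsymbol{W}^{-1}\Phi_j,\Phi_j\rangle$ is not valid for the pencil you wrote down. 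Second, and more seriously, positivity of the numerator does \emph{not} follow from ``$\boldsymbol{M}'(z)$ is built linearly from $R_0(z)^2=R_0(z)^*R_0(z)$'': the $(1,1)$ entry of $-\boldsymbol{M}'$ is $|V_{23}|^{1/2}R_0^2(I+T)|V_{23}|^{1/2}$ with the indefinite symmetry $T$ inserted, and with the weights of $\boldsymbol{U}$ the quadratic form evaluates to $\|a+b\|^2-\tfrac12\|b\|^2$ (with $a=\tfrac1{\sqrt2}R_0(I+T)|V_{23}|^{1/2}\phi_1$, $b=\sqrt2\,R_0|V_{12}|^{1/2}\phi_2$), which is not sign-definite. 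The statement you need is true, but only for the symmetrized pencil $I-\boldsymbol{D}_1(z)-\boldsymbol{D}_2(z)$, whose derivative form does complete the square to $\|a+\tfrac{1}{\sqrt2}b\|^2$, and only after identifying $a+\tfrac1{\sqrt2}b=\tfrac1{\sqrt2}\psi$ at the null vector to get the numerator $\tfrac12\|\psi\|^2>0$; none of this is in your text, and the heuristic you offer in its place would fail. (You would also need real-analyticity of $z\mapsto\tilde{\boldsymbol{A}}(z)$ to get differentiable branches through a degenerate crossing.) Step \emph{(i)} and the counting skeleton are essentially fine, but as written the proof of the key monotonicity is both incomplete and based on an incorrect operator inequality.
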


\begin{proof}
	We adapt the  proof of Theorem~3.1  in \cite{sobolev} to our case.  First we show that
		\be\label{n1}
		N(z)=n \big(1,R_0(z)^{1/2} |V| R_0(z)^{1/2} \big)
	\ee
	where $|V|=|V_{23}|+T|V_{23}|+|V_{12}|.$ Indeed, let $x\in\mathfrak{h}_{-H}(-z)$ then
	\begin{equation}
		((H_0-z)x,x)<(|V|x,x).
	\end{equation}
	Setting  $y=(H_0-z)^{1/2}x$ and using selfadjointness of $(H_0-z)^{1/2}$ and $R_0(z)^{1/2}$ we have
	\begin{equation}
		(R_0(z)^{1/2}|V|R_0(z)^{1/2}y,y)>(y,y)
	\end{equation}
	that is $y\in \mathfrak{h}_{R_0(z)^{1/2}|V|R_0(z)^{1/2}}(1).$ This implies $n(1,R_0(z)^{1/2}|V|R_0(z)^{1/2})\geq N(z).$ Reversing the argument we get the opposite inequality.
	\\Next we introduce the matrix operator $\boldsymbol{L}(z)$ on $\Hil^2$ defined by 


\vsa
	\begin{equation}
		\boldsymbol{L}(z)= \f{1}{2} 
		\begin{pmatrix}
			R_0(z)^{1/2}|V| R_0(z)^{1/2}\;&\; R_0(z)^{1/2}|V|R_0(z)^{1/2}\\
			\\
			R_0(z)^{1/2}|V|R_0(z)^{1/2} \;& \;R_0(z)^{1/2}|V|R_0(z)^{1/2}
		\end{pmatrix}
	\end{equation}
	
	\vsa

\vsa
\n
 and we note that
	\be\label{n2}
		n\big(1,R_0(z)^{1/2} |V|R_0(z)^{1/2} \big)=n(1,\boldsymbol{L}(z)).
	\ee
	The operator $\boldsymbol{L}(z)$ can be written as 
	
	\begin{equation}
	\boldsymbol{L}(z)=\boldsymbol{S}(z)\boldsymbol{S}^*(z)
	\end{equation}
	
	\n
	where

	\begin{equation}
		\boldsymbol{S}(z)=
		\begin{pmatrix}
			\frac{1}{2}R_0(z)^{1/2}(I+T)|V_{23}|^{1/2} & \f{1}{\sqrt{2}} R_0(z)^{1/2}|V_{12}|^{1/2}\\
			\\
			\frac{1}{2}R_0(z)^{1/2}(I+T)|V_{23}|^{1/2} & \f{1}{\sqrt{2}}  R_0(z)^{1/2}|V_{12}|^{1/2}
		\end{pmatrix}
	\end{equation}
	
	\vsa
	\n
	is an operator acting from $\mathcal{K}$ to  $\mathcal{H}^2$ and 
	 its adjoint  $\boldsymbol{S}^*(z)$, acting from $\mathcal{H}^2$ to $\mathcal{K}$, is given by 
	\vsa
	\begin{equation}
		\boldsymbol{S}^*(z)=
		\begin{pmatrix}
			|V_{23}|^{1/2}R_0(z)^{1/2} & |V_{23}|^{1/2}R_0(z)^{1/2}\\
			\\
			\f{1}{\sqrt{2}} |V_{12}|^{1/2}R_0(z)^{1/2} & \f{1}{\sqrt{2}} |V_{12}|^{1/2}R_0(z)^{1/2}
		\end{pmatrix}
	\end{equation}
	
	\vsa
	\n
	where we have used $T=I$ on the space $\mathcal{H}$.	Since  $n(\lambda,BB^*)=n(\lambda,B^*B)$ for any bounded operator $B$ (see e.g., lemma 4.2 in \cite{sobolev}), we have  
	
\be\label{n3}
n(1,\boldsymbol{L}(z))=n(1,\boldsymbol{S}^*(z)\boldsymbol{S}(z))
\ee

\vsa
\n
 where $\boldsymbol{S}^*(z)\boldsymbol{S}(z)$ 
  is an operator on $\mathcal{K}$  explicitly given by 
	\begin{equation}
		\boldsymbol{S}^*(z)\boldsymbol{S}(z)=
		\begin{pmatrix}
			|V_{23}|^{1/2}R_0(z)(I+T)|V_{23}|^{1/2} & \sqrt{2}|V_{23}|^{1/2}R_0(z)|V_{12}|^{1/2}\\
			\\
			\frac{\sqrt{2}}{2}|V_{12}|^{1/2}R_0(z)(I+T)|V_{23}|^{1/2} &|V_{12}|^{1/2}R_0(z)|V_{12}|^{1/2}
		\end{pmatrix}.
	\end{equation}

\vsa
\n	
Let us decompose the above operator as follows 		
\be
\boldsymbol{S}^*(z)\boldsymbol{S}(z)=\boldsymbol{D}_1(z)+ \boldsymbol{D}_2(z)
\ee
 where  
 \be
\boldsymbol{D}_1(z)=
\begin{pmatrix}
	|V_{23}|^{1/2}R_0(z)|V_{23}|^{1/2} & 0\\
	\\
	0 &|V_{12}|^{1/2}R_0(z)|V_{12}|^{1/2}
\end{pmatrix}
 \ee

\vsa
\n
and
\vsa
\n
\begin{equation}
	\boldsymbol{D}_2(z)=
	\begin{pmatrix}
		|V_{23}|^{1/2}R_0(z)T|V_{23}|^{1/2} & \sqrt{2}|V_{23}|^{1/2}R_0(z)|V_{12}|^{1/2}\\
		\\
		\frac{\sqrt{2}}{2}|V_{12}|^{1/2}R_0(z)(I+T)|V_{23}|^{1/2} & 0
	\end{pmatrix}.
\end{equation}
 
 \vsa
 \n
 Moreover, let us define 
 \vsa
	\be\label{eq:rel_A_L}
		\tilde{\boldsymbol{A}}(z)
= (I-\boldsymbol{D}_1(z) )^{-1/2}\boldsymbol{D}_2(z)(I-\boldsymbol{D}_1(z))^{-1/2}	
\ee

 \n
and note that, from the definition \eqref{eq:W}, it follows $(I-\boldsymbol{D}_1(z))^{-1/2}=\boldsymbol{W}(z)^{1/2}$.

\n
Let us prove that 
\vsa
	\begin{equation}\label{n4}
		n(1,\boldsymbol{S}^*(z) \boldsymbol{S}(z)): = n(1, \boldsymbol{D}_1(z) + \boldsymbol{D}_2(z) ) = n(1,\tilde{\boldsymbol{A}}(z)).
	\end{equation}
	
	\vsa
	\n
Assume $x\in \mathfrak{h}_{\tilde{\boldsymbol{L}}(z)}(1),$ i.e., 
\begin{equation}
	((\boldsymbol{D}_1(z)+\boldsymbol{D}_2(z))x,x)>(x,x)
\end{equation} 
then
\begin{equation}
	(\boldsymbol{D}_2(z)x,x)>((I-\boldsymbol{D}_1(z))x,x).
\end{equation}

\vsa
\n
Defining $y=(I-\boldsymbol{D}_1(z))^{1/2}x$ and using \eqref{eq:rel_A_L} we get 
\begin{equation}
	(\tilde{\boldsymbol{A}}(z)y,y)>(y,y)
\end{equation}
which means $y\in\mathfrak{h}_{\tilde{\boldsymbol{A}}(z)} (1).$ This proves $n(1,\boldsymbol{S}^*(z) \boldsymbol{S}(z))\leq n(1,\tilde{\boldsymbol{A}}(z)).$ To get the opposite inequality it is sufficient to reverse the argument.

	\n
	We also note that for $z<0$ the operator  $\tilde{\boldsymbol{A}}(z)$ is compact  and it is continuous in $z.$ 
	
	\n
	Finally,  by a direct computation one verifies that 
	
	\begin{equation}
	\tilde{\boldsymbol{A}}(z)=
	\begin{pmatrix}
		1 & 0\\
		0 & \f{\sqrt{2}}{2}
	\end{pmatrix} \boldsymbol{A}(z) 
	\begin{pmatrix}
		1 & 0\\
		0 & \sqrt{2}
	\end{pmatrix}.
\end{equation}
This implies that $ \tilde{\boldsymbol{A}}(z)$ and $\boldsymbol{A}(z)  $ have the same eigenvalues and if $\tilde{\Psi}=(\psi_{23},\psi_{12})$ is an eigenfunction of $\tilde{\boldsymbol{A}}(z)$  then $\Psi=(\psi_{23},\sqrt{2}\psi_{12})$ is an eigenfunction of $\boldsymbol{A}(z)$ with the same eigenvalue.
Thus, in particular
	\be\label{n5}
	n(1,\tilde{\boldsymbol{A}}(z))=n(1,\boldsymbol{A}(z))
	\ee
	
	\n
Taking into account of \eqref{n1}, \eqref{n2}, \eqref{n3}, \eqref{n4}, \eqref{n5}, we conclude the proof.

\end{proof}

\vsa
\n
We conclude this section describing the behavior of the operators $W_{\alpha}(z)$ when $z<0$ is small. 
\\Let us introduce the multiplication operator   in $L^2(\R^6)$ 
\be\label{Gama}
(\Gamma_\alpha (z) f)(k_{\alpha},p_{\alpha}) = \gamma\left(\frac{p_{\alpha}^2}{2n_\alpha}-z\right) f(k_{\alpha},p_{\alpha})
\ee
where $\gamma \in C^{\infty}(\R_+)$ is such that $\gamma (t) >0$ for all $t,$ $\gamma(t)=t$ if $t\leq 1$ and $\gamma(t)=1$ if $t\geq 2.$ 

\n
Moreover, for the resonant pair $23$, we define the operator in $L^2(\R^6)$
\be\label{Pi23}
(\Pi_{23} f) (k_{23}, p_{23}) = \f{1}{\|\varphi\|} \hat{\varphi}  (k_{23}) \int \!\!dk \, f(k,p_{23}) \overline{\hat{ \varphi}(k)}
\ee
where $\varphi$ is the eigenfunction of $|v|^{1/2} g_0 |v|^{1/2}$ with eigenvalue $1$ (see definition~\ref{defre}). 

\n
Using the relation \eqref{eq:Ww} and  Lemma~\ref{lemma:res},  we find 

	\begin{equation}\label{eq:U_res}
		W_{23}(z)^{1/2}=\Gamma_{23}(z)^{-1/4} \, \Pi_{23}+\Gamma_{23}(z)^{-\frac{1-\delta}{4}} \, \tilde{W}_{23}^{(\delta)}(z)
	\end{equation}
	
	\vsa
	\n
	where 
	 $\delta < \min \{1,b-3\}$ and $\tilde{W}_{23}^{(\delta)}(z)$ is continuous in $z\leq0.$
	\\On the other hand, Lemma~\ref{lemma:no_res} implies that  $W_{12}(z)^{1/2}$ is continuous in $z\leq 0.$ 


\vs\vs

\section{\texorpdfstring{Leading term of $\mathbf{A} (z)$ for $z \rightarrow 0^-$}{Leading term of A(z) for z->0-}}\label{sect:lead_term}

The proof of our main result expressed in theorem~\ref{th:main1} requires, via theorem~\ref{th:fadd}, an asymptotic analysis of $n(1, \boldsymbol{A}(z) )$ for $z \rightarrow 0^-$. From theorem~\ref{th:A_comp}, we know that for $z<0$ the operator  $\boldsymbol{A}(z)$ is compact but there is a lack of compactness for $z=0$ and this is the reason why we find that $N(z) $ diverges for $z \rightarrow 0^-$. In this section we shall prove various intermediate results where, at each step, we single out the leading term of $\boldsymbol{A}(z) )$ for $z \rightarrow 0^-$, neglecting operators which are compact for $z\leq 0$. At the end, we shall obtain the following integral operator acting in $L^2((0,R)\times \mathbb{S}^2,dr\otimes d\Omega)$ 
\be\label{SR}
	(S_Rf) (r,\omega)=  \int_0^R \!\!\! d\rho \! \int_{\mathbb{S}^2}\!\! d\Omega(\zeta) \, S(r-\rho, \omega \cdot \zeta) f(\rho, \zeta)  
\ee
where 
\ba\label{SRK}
S(x, y)& \!\!=\!\!& -b(m)  \, \frac{1}{\cosh x+\frac{y}{m+1}} \, , \;\;\;\;\;\;\;\;     
x \in \R,\,\;\;y\in[-1,1]\\
R&\!\!=\!\!&R(z)=\frac{1}{2}|\log|z|| \, ,\\
b(m)&\!\!=\!\!&\frac{1}{4\pi^2}\frac{m+1}{\sqrt{m(m+2)}}\, .
\ea

\n
In  section 6 we shall prove that the asymptotic behavior  of $n(1, \boldsymbol{A}(z) )$ for $z \rightarrow 0^-$ coincides with that of $n(1,S_R)$ for $R \rightarrow +\infty$.

\n
As a first step, we show that the terms in $\boldsymbol{A}(z)$ depending on the interaction between the two fermions give a compact contribution and can be neglected.

	\begin{lemma}\label{laa0}
	For $z\leq0$ the operator  $\boldsymbol{A}(z)-\boldsymbol{A}_0(z)$ is  compact  and it is continuous in $z$, where
	
\be
		\boldsymbol{A}_0(z)=
		\begin{pmatrix}
			A_0(z) & 0\\
			0 & 0
		\end{pmatrix}
	\ee
	and
	\be
		A_0(z)=\Pi_{23}\Gamma_{23}^{-1/4}(z)|V_{23}|^{1/2}R_0(z)T|V_{23}|^{1/2}\Gamma_{23}^{-1/4}(z)\Pi_{23}.
	\ee
\end{lemma}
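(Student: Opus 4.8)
The plan is to isolate the "resonant" block of $\boldsymbol{A}(z)$ and show that everything else is compact and continuous in $z$ up to $z=0$. Recall from \eqref{eq:def_A} that $\boldsymbol{A}(z)=\boldsymbol{W}(z)^{1/2}\boldsymbol{U}(z)\boldsymbol{W}(z)^{1/2}$, that $\boldsymbol{W}(z)^{1/2}=\diag\{W_{23}(z)^{1/2},W_{12}(z)^{1/2}\}$, and that by Lemma~\ref{lemma:no_res} (applied as in Remark~\ref{remfer}, since $h_{12}$ has neither a zero eigenvalue nor a zero resonance) the operator $W_{12}(z)^{1/2}$ is continuous on $z\le 0$. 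The only singular object is $W_{23}(z)^{1/2}$, for which \eqref{eq:U_res} gives the decomposition $W_{23}(z)^{1/2}=\Gamma_{23}(z)^{-1/4}\Pi_{23}+\Gamma_{23}(z)^{-\frac{1-\delta}{4}}\tilde W_{23}^{(\delta)}(z)$ with $\tilde W_{23}^{(\delta)}(z)$ continuous on $z\le 0$ and $\delta<\min\{1,b-3\}$ fixed. Substituting this into $\boldsymbol{A}(z)$ and expanding the $2\times 2$ product, one gets $A_0(z)=\Pi_{23}\Gamma_{23}^{-1/4}(z)|V_{23}|^{1/2}R_0(z)T|V_{23}|^{1/2}\Gamma_{23}^{-1/4}(z)\Pi_{23}$ as the $(1,1)$ entry of the leading term, and a finite list of remainder terms.

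The remainder terms fall into two groups. The \emph{off-diagonal and $(2,2)$ terms} all carry at least one factor of $W_{12}(z)^{1/2}$ (continuous on $z\le 0$) together with a factor $|V_{23}|^{1/2}R_0(z)|V_{12}|^{1/2}$ or $|V_{12}|^{1/2}R_0(z)(I+T)|V_{23}|^{1/2}$; these are compact for $z<0$ and continuous on $z\le 0$ because the two cutoff potentials act on \emph{different} pairs of Jacobi variables, so that after a change of variables the kernel is genuinely square-integrable on $z\le 0$ — this is precisely the mechanism that makes the iterated Faddeev operator compact (Theorem~\ref{th:A_comp}), and it already contains $z=0$ since no resonance is associated with the pair $12$. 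The \emph{$(1,1)$ remainder terms} are the cross terms $\Pi_{23}\Gamma_{23}^{-1/4}(\cdots)\Gamma_{23}^{-\frac{1-\delta}{4}}\tilde W_{23}^{(\delta)}$, its adjoint, and $\tilde W_{23}^{(\delta)}\Gamma_{23}^{-\frac{1-\delta}{4}}(\cdots)\Gamma_{23}^{-\frac{1-\delta}{4}}\tilde W_{23}^{(\delta)}$, all sandwiching the bounded operator $|V_{23}|^{1/2}R_0(z)T|V_{23}|^{1/2}$. For these, one uses that each pair of $\Gamma_{23}$-powers contributes an overall factor whose worst growth is $|z|^{-\frac{1-\delta}{2}}\cdot|z|^{0}$ or better — strictly milder than the $|z|^{-1/2}$ of the leading term — while the operator $|V_{23}|^{1/2}R_0(z)T|V_{23}|^{1/2}$ gains an extra fractional smoothing from the exchange operator $T$ (which mixes $k_{23}$ and $p_{23}$) that more than compensates. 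Concretely, one writes $|V_{23}|^{1/2}R_0(z)T|V_{23}|^{1/2}=\four_{23}^{*}(\cdots)\four_{23}$ as in \eqref{eq:Ww}–\eqref{intfour}, inserts $\Gamma_{23}^{-\sigma}$ with $\sigma\in\{1/4,\frac{1-\delta}{4}\}$, and checks via the Schur test (or Hilbert–Schmidt estimate using assumption \ref{ass1} with $b>3$) that the resulting kernel is norm-bounded by $|z|^{-1/2+\epsilon}$ times a compact, $z$-continuous operator; dividing by the $|z|^{-1/2}$ scale of $A_0(z)$, the remainder is $O(|z|^{\epsilon})$ in operator norm relative to the leading block and in particular extends continuously (in the compact operators) to $z=0$.

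The main obstacle is the $(1,1)$ cross term $\Pi_{23}\Gamma_{23}^{-1/4}(z)\,|V_{23}|^{1/2}R_0(z)T|V_{23}|^{1/2}\,\Gamma_{23}^{-\frac{1-\delta}{4}}(z)\,\tilde W_{23}^{(\delta)}(z)$: here only one $\Gamma_{23}$-power is at the "good" exponent $\frac{1-\delta}{4}$ and the other is the "bad" $1/4$, so the naive power count only yields $|z|^{-1/2+\delta/4}$, i.e.\ the same order as $A_0(z)$ up to $|z|^{\delta/4}$. One must therefore exploit the oscillatory/decay structure carefully: the kernel of $R_0(z)$ in the variables $(p_{23},p_{31})$, cf.\ \eqref{eq:H_0(p)}, together with the rank-one projection $\Pi_{23}$ in the $k_{23}$ variable and the normalization \eqref{eq:norm}, forces the relevant momentum integral to converge uniformly as $z\to 0$, which upgrades the bound to $|z|^{-\frac{1-\delta}{2}}$ times a $z$-continuous compact operator — strictly subleading. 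I would handle this exactly as the corresponding estimate in \cite{sobolev}, adapting the change of variables to our mass-dependent Jacobi coordinates \eqref{eq:k,p}–\eqref{eq:p,q}; the fermionic modification only changes $T$ and the numerical coefficients, not the structure of the bound. Collecting the two groups of remainders gives that $\boldsymbol{A}(z)-\boldsymbol{A}_0(z)$ is compact and continuous on $z\le 0$, as claimed.
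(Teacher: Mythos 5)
Your decomposition is the right one and is exactly the paper's: substitute $W_{23}(z)^{1/2}=\Gamma_{23}(z)^{-1/4}\Pi_{23}+\Gamma_{23}(z)^{-(1-\delta)/4}\tilde W_{23}^{(\delta)}(z)$ into $\boldsymbol{A}(z)=\boldsymbol{W}(z)^{1/2}\boldsymbol{U}(z)\boldsymbol{W}(z)^{1/2}$, expand, and identify $A_0(z)$ as the unique term carrying the full weight $\Gamma_{23}^{-1/4}(\cdots)\Gamma_{23}^{-1/4}$. Where your argument does not close is in the justification that the remainder is compact and continuous up to $z=0$. The criterion the paper actually uses (lemma 4.4 of \cite{sobolev}) is a single clean statement: $\boldsymbol{U}^{(\mu,\nu)}(z)=\boldsymbol{\Gamma}(z)^{-\mu}\boldsymbol{U}(z)\boldsymbol{\Gamma}(z)^{-\nu}$ is compact and continuous on $z\leq 0$ whenever $\mu+\nu<\tfrac12$; the borderline $\mu=\nu=\tfrac14$ occurs only in $A_0(z)$, and after writing $W_{12}(z)^{1/2}=\Gamma_{12}(z)^{-(1-\delta)/4}\tilde W_{12}^{(\delta)}(z)$ with $\tilde W_{12}^{(\delta)}(z)$ bounded and continuous, \emph{every} remainder term is a product of bounded, $z$-continuous factors with one $\boldsymbol{U}^{(\mu,\nu)}$, $\mu+\nu\leq \tfrac12-\tfrac{\delta}{4}$, in the middle. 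Compactness and continuity of the whole remainder then follow at once.

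Two specific places where your reasoning, as written, would fail or lead you astray. First, the off-diagonal entries: you attribute their good behaviour to the fact that the pair $12$ is non-resonant, but these entries still carry the singular left factor $W_{23}(z)^{1/2}$, hence a $\Gamma_{23}(z)^{-1/4}$ weight; what saves them is again the criterion $\mu+\nu<\tfrac12$ (here $\mu=\tfrac14$, $\nu\leq\tfrac{1-\delta}{4}$), not the non-resonance of $h_{12}$, which enters only through the boundedness and continuity of $W_{12}(z)^{1/2}$. Second, your ``main obstacle'' --- the $(1,1)$ cross term with exponents $(\tfrac14,\tfrac{1-\delta}{4})$ --- is not an obstacle: $\tfrac14+\tfrac{1-\delta}{4}=\tfrac12-\tfrac{\delta}{4}<\tfrac12$, so the same lemma applies directly. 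The upgrade you propose (from a norm bound $|z|^{-1/2+\delta/4}$ to $|z|^{-(1-\delta)/2}$ via oscillatory structure, the exchange operator $T$, and the rank-one projection) is both unnecessary and beside the point: the statement to be proved is compactness and continuity at $z=0$, which is not delivered by an operator-norm growth rate relative to $A_0(z)$. The correct bookkeeping is in the sum of the $\Gamma$-exponents, nothing more.
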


\begin{proof}
Let us introduce the operators  
\ba
\boldsymbol{\Gamma}(z)&=&\diag\{\Gamma_{23}(z),\Gamma_{12}(z)\}\,, \\
		\tilde{\boldsymbol{W}}^{(\delta)}(z)&=&\diag\{\tilde{W}^{(\delta)}_{23}(z),\tilde{W}^{(\delta)}_{12}(z)\}\,, \\
		\boldsymbol{\Pi} &=&
		\begin{pmatrix}
			\Pi_{23} & 0\\
			0 & 0
		\end{pmatrix}
		\ea
where $\delta < \min \{1,b-3\}$, $\tilde{W}^{(\delta)}_{12}(z)  =\Gamma_{12}(z)^{\frac{1-\delta}{4}}W_{12}(z)$ is continuous in $z \leq 0$ and the other terms have been defined in \eqref{Gama}, \eqref{Pi23}, \eqref{eq:U_res}.

\n
Using the above notation we write
\be
		\boldsymbol{A}(z)=\boldsymbol{A}_0(z)+\boldsymbol{R}(z)
	\ee
	\vsa
	\n
	where
	\ba
		\boldsymbol{R}(z)&=& \boldsymbol{\Pi} \, \boldsymbol{U}^{(1/4,(1-\delta)/4)}(z)\tilde{\boldsymbol{W}}^{(\delta)}(z)+\tilde{\boldsymbol{W}}^{(\delta)}(z)\boldsymbol{U}^{((1-\delta)/4,(1-\delta)/4)}(z)\tilde{\boldsymbol{W}}^{(\delta)}(z)\nonumber \\
&+&\tilde{\boldsymbol{W}}^{(\delta)}(z)\boldsymbol{U}^{((1-\delta)/4,1/4)}(z)\boldsymbol{\Pi}
	\ea
and
\be
		\boldsymbol{U}^{(\mu,\nu)}(z)=\boldsymbol{\Gamma}(z)^{-\mu}\boldsymbol{U}(z)\boldsymbol{\Gamma}(z)^{-\nu}, \;\;\;\;\;\;\;\; 0\leq \mu, \nu \leq \f{1}{4}, \;\;\; \mu + \nu <\f{1}{2}\,.
	\ee
	By lemma 4.4 in \cite{sobolev}, for $z\leq 0$ the operator $ \boldsymbol{U}^{(\mu,\nu)}(z)$ is compact and it is continuous in $z$. Since $\tilde{W}_{\alpha}^{(\delta)}(z)$, $z \leq 0$,  and $\Pi_{23}$ are bounded, we conclude that for $z\leq 0$ the operator $\boldsymbol{R}(z)$ is compact and it is continuous in $z$.

\end{proof}

\n
In the next step we reduce the problem to the analysis of an operator in $L^2(\R^3)$. Such an operator is   better analysed using the coordinates $(p_{23},p_{31}).$

\begin{lemma}\label{la0b}
For $\lambda>0$ and $z<0$ we have 
	\begin{equation}\label{eq:B}
		n(\lambda,\boldsymbol{A}_0(z))=n(\lambda,\mathcal{B}(z))
	\end{equation}
	where $\mathcal{B}(z)$ is the integral operator in $L^2(\R^3)$ with  kernel
	\begin{equation}\label{eq:kernel_B}
		\mathcal{B}(p,q;z)=-\frac{\widehat{|v|^{1/2}\varphi}\left(p+\frac{q}{m+1}\right)\overline{\widehat{|v|^{1/2}\varphi}\left(q+\frac{p}{m+1}\right)}}{\gamma\left(\frac{p^2}{2n}-z\right)^{1/4}(H_0-z)\,\gamma\left(\frac{q^2}{2n}-z\right)^{1/4}}.
	\end{equation}
\end{lemma}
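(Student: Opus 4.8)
The plan is to strip off the trivial $2\times 2$ structure, then realize $A_0(z)$ as an operator on a single copy of $L^2(\R^3)$ by compressing to the range of $\Pi_{23}$, and finally identify the compressed operator with $\mathcal B(z)$ through an explicit kernel computation carried out in the coordinates $(p_{23},p_{31})$.

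First, since $\boldsymbol A_0(z)=\diag\{A_0(z),0\}$ on $\mathcal K=L^2(\R^6)\times\mathcal H$, one has $n(\lambda,\boldsymbol A_0(z))=n(\lambda,A_0(z))$ for every $\lambda>0$, so it suffices to analyse $A_0(z)=\Pi_{23}\,X(z)\,\Pi_{23}$ with $X(z):=\Gamma_{23}(z)^{-1/4}|V_{23}|^{1/2}R_0(z)T|V_{23}|^{1/2}\Gamma_{23}(z)^{-1/4}$. By \eqref{Pi23} the range of $\Pi_{23}$ consists of the functions $\hat\varphi(k_{23})g(p_{23})$, $g\in L^2(\R^3)$, and the map $J\colon L^2(\R^3)\to L^2(\R^6)$, $(Jg)(k_{23},p_{23})=\|\varphi\|^{-1}\hat\varphi(k_{23})g(p_{23})$, is an isometry onto $\mathrm{Ran}\,\Pi_{23}$ (using $\|\hat\varphi\|=\|\varphi\|$), with $J^*J=I$ and $\Pi_{23}=\|\varphi\|\,JJ^*$. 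Hence $A_0(z)=J\big(\|\varphi\|^2 J^*X(z)J\big)J^*$, so $A_0(z)$ is unitarily equivalent to $\big(\|\varphi\|^2 J^*X(z)J\big)\oplus 0$ and $n(\lambda,A_0(z))=n(\lambda,\|\varphi\|^2 J^*X(z)J)$ for $\lambda>0$. It thus remains to show $\|\varphi\|^2 J^*X(z)J=\mathcal B(z)$.

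To compute this operator I pass from $(k_{23},p_{23})$ to $(p_{23},p_{31})$ via \eqref{eq:p,q}, a change of variables of Jacobian one; in these coordinates $T$ is the transposition $(T\psi)(p_{23},p_{31})=-\psi(p_{31},p_{23})$, $R_0(z)$ is multiplication by $(H_0-z)^{-1}$ with $H_0$ as in \eqref{eq:H_0(p)}, and, by \eqref{intfour} and \ref{ass2}, $|V_{23}|^{1/2}$ is convolution in $p_{31}$ with the even function $\widehat{|v|^{1/2}}$. Applying $\Gamma_{23}(z)^{-1/4}$, $|V_{23}|^{1/2}$, $R_0(z)$, $T$, $|V_{23}|^{1/2}$, $\Gamma_{23}(z)^{-1/4}$ in turn to $Jg$ and then applying $J^*$, the two convolution integrals that arise are evaluated by the identity $\widehat{fg}=(2\pi)^{-3/2}\hat f*\hat g$ together with the spherical symmetry of $\varphi$ (hence of $\hat\varphi$ and of $|v|^{1/2}\varphi$; see Remark~\ref{rem:s_wave}), which produces precisely the factors $\widehat{|v|^{1/2}\varphi}\big(p+\tfrac{q}{m+1}\big)$ and $\overline{\widehat{|v|^{1/2}\varphi}\big(q+\tfrac{p}{m+1}\big)}$; carrying out the remaining $p_{31}$–integration one finds that $J^*X(z)J$ is the integral operator on $L^2(\R^3)$ whose kernel is $\|\varphi\|^{-2}$ times the right-hand side of \eqref{eq:kernel_B}. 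Multiplying by $\|\varphi\|^2$ gives $\mathcal B(z)$, and combined with the previous step this is \eqref{eq:B}.

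The genuinely mechanical part is the chain of Fourier-transform manipulations. The step that requires care, and the one that fixes the final shape of the kernel, is the passage to the coordinates $(p_{23},p_{31})$: only in these variables does the fermionic symmetry $T$ reduce to a plain exchange, which is what makes $R_0(z)$ into a multiplier and generates the symmetric pair of arguments $p+\tfrac{q}{m+1}$, $q+\tfrac{p}{m+1}$ appearing in \eqref{eq:kernel_B}. One must also track the normalization constant $\|\varphi\|$ hidden in the definition \eqref{Pi23} of $\Pi_{23}$, pulling it through the isometry $J$ so that the factor $\|\varphi\|^{-2}$ generated by $J^*\cdots J$ is exactly cancelled by the $\|\varphi\|^2$ coming from $\Pi_{23}=\|\varphi\|\,JJ^*$, leaving no spurious constant in $\mathcal B(z)$.
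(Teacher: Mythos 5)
Your proposal is correct and follows essentially the same route as the paper: both reduce to the $(1,1)$ entry $A_0(z)$, exploit the rank-one structure of $\Pi_{23}$ in the $k_{23}$ variable to compress the problem to $L^2(\R^3)$, and identify the resulting kernel through the same convolution identity $\widehat{|v|^{1/2}}*\hat\varphi=(2\pi)^{3/2}\widehat{|v|^{1/2}\varphi}$ in the $(p_{23},p_{31})$ coordinates. The only difference is presentational: you package the compression as a unitary equivalence $A_0(z)=J\big(\|\varphi\|^2J^*X(z)J\big)J^*$ via an explicit isometry $J$, whereas the paper obtains the same correspondence by matching eigenvalue equations, sending an eigenfunction $\psi$ of $A_0(z)$ to $\xi(p)=\int dq\,\overline{\hat\varphi\left(q+\frac{p}{m+1}\right)}\psi(p,q)$ and constructing the inverse map explicitly.
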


\begin{proof}
We first observe that  $n(\lambda,\boldsymbol{A}_0)=n(\lambda,A_0).$ Moreover,   $A_0$ is compact for $z< 0$ and then  $n(\lambda,A_0(z))$ is the number of its eigenvalues larger than $\lambda.$ For an eigenvalue $\tilde{\lambda}>\lambda>0$, the eigenvalue equation for $A_0(z)$ is explicitly given by 
	\begin{multline}\label{eaa0}
		\tilde{\lambda}\psi(p_1,p_2)=-\frac{1}{(2\pi)^3}\int dp_2'\,\frac{\widehat{|v|^{1/2}}(p_2-p_2')}{\gamma\left(\frac{p_1^2}{2n}-z\right)^{1/4}\left(H_0(p_1,p_2')-z\right)\gamma\left(\frac{p_2'^2}{2n}-z\right)^{1/4}}\\
		\times\int dp_1'\, \widehat{|v|^{1/2}}(p_1-p_1')\hat{\varphi}\left(p_1'+\frac{p_2'}{m+1}\right)\int dq\, \overline{\hat{\varphi}\left(q+\frac{p_2'}{m+1}\right)}\psi(p_2',q).
	\end{multline}	
	\n
Since
	\be
		\begin{aligned}
			\int dq\,\widehat{|v|^{1/2}}(p_1-q)\hat{\varphi}\left(q+\frac{p_2}{m+1}\right)&=\int dq'\, \widehat{|v|^{1/2}}\left(q'-p_1-\frac{p_2}{m+1}\right)\hat{\varphi}(q')\\
																																								&=(2\pi)^{3/2}\widehat{|v|^{1/2}\varphi}\left(p_1+\frac{p_2}{m+1}\right)
		\end{aligned}
	\ee
	we  rewrite equation~\eqref{eaa0} as follows
	\begin{multline}\label{eaa02}
		\tilde{\lambda}\psi(p_1,p_2)=-\frac{1}{(2\pi)^{3/2}}\int dp_2'\, \frac{\widehat{|v|^{1/2}}(p_2-p_2')\widehat{|v|^{1/2}\varphi}\left(p_1+\frac{p_2'}{m+1}\right)}{\gamma\left(\frac{p_1^2}{2n}-z\right)^{1/4}(H_0(p_1,p_2')-z)\gamma\left(\frac{p_1^2}{2n}-z\right)^{1/4}}
		\\ \times\int dq\,\overline{\hat{\varphi}\left(q+\frac{p_2'}{m+1}\right)}\psi(p_2',q).
	\end{multline}
	Let us define
	\be
		\xi(p)=\int dq\,\overline{\hat{\varphi}\left(q+\frac{p}{m+1}\right)}\psi(p,q).
	\ee
	Then $\xi\in L^2(\R^3)$ and, by \eqref{eaa02}, it satisfies the equation
	\be
		\begin{aligned}
			\tilde{\lambda}\xi(p_1)&=-\frac{1}{(2\pi)^{3/2}}\int dp_2\, \overline{\hat{\varphi}\left(p_2+\frac{p_1}{m+1}\right)}\\
														 &\phantom{{=}}\times \int dp_2'\frac{\widehat{|v|^{1/2}}(p_2-p_2')\widehat{|v|^{1/2}\varphi}\left(p_1+\frac{p_2'}{m+1}\right)}{\gamma\left(\frac{p_1^2}{2n}-z\right)^{1/4}(H_0(p_1,p_2')-z)\gamma\left(\frac{p_2'^2}{2n}-z\right)^{1/4}}\xi(p_2')\\
														 &=-\int dq\, \frac{\widehat{|v|^{1/2}\varphi}\left(p_1+\frac{q}{m+1}\right)\overline{\widehat{|v|^{1/2}\varphi}\left(q+\frac{p_1}{m+1}\right)}}{\gamma\left(\frac{p_1^2}{2n}-z\right)^{1/4}(H_0(p_1,q)-z)\gamma\left(\frac{q^2}{2n}-z\right)^{1/4}}\xi(q)
		\end{aligned}
	\ee
	that is $\mathcal{B}(z)\xi=\tilde{\lambda}\xi.$  
	On the other hand, if $\xi \in L^2(\R^3)$ is such that $\mathcal{B}(z)\xi=\tilde{\lambda}\xi$ then 
	\be 
		\psi(p_1,p_2)=-\frac{1}{(2\pi)^{3/2}}\int dq\, \frac{\widehat{|v|^ {1/2}}(p_2-q)\widehat{|v|^{1/2}\varphi}\left(p_1+\frac{q}{m+1}\right)}{\gamma\left(\frac{p_1^2}{2n}-z\right)^{1/4}(H_0(p_1,q)-z)\gamma\left(\frac{q^2}{2n}-z\right)^{1/4}}\, \xi(q)
	\ee
	satisfies the equation  $A_0(z)\psi=\tilde{\lambda}\psi$ and therefore the lemma is proved.

\end{proof}

\n
The lack of compactness of $\mathcal{B}(z)$ for $z=0$ is clearly due to the behavior of its integral kernel near the origin. Indeed, in the following Lemma we show that the  difference of $\mathcal{B}(z)$ with an operator whose kernel is different from zero only in a ball of radius one  is compact and continuous in $z\leq0.$ Denoted by $\chi_a$ the characteristic function of the ball of radius $a>0$, we have 

\begin{lemma}\label{lB}
For $z \leq 0$ the operator $\mathcal{B}(z)-\tilde{\mathcal{B}}(z)$ is compact and it is continuous in $z$, where $\tilde{\mathcal{B}}(z)$ is the operator in $L^2(\R^3)$ with  kernel
	\be
		\tilde{\mathcal{B}}(p,q;z)=-\,\frac{1}{c(m)}\frac{\chi_1 (p) \chi_1 (q)}{\left(\frac{p^2}{2n}-z\right)^{1/4}(H_0(p,q)-z)\left(\frac{q^2}{2n}-z\right)^{1/4}}
	\ee
 and $c(m)=2^{5/2}\pi^2\left(\frac{m}{m+1}\right)^{3/2}$.	
Moreover
\be\label{boh}
n(\lambda,\tilde{\mathcal{B}}(z))=n(\lambda,\mathcal{B}_0(z))
\ee
where $\mathcal{B}_0(z)$ is the integral operator in $L^2(\R^3)$  with  kernel 
	\be
		\mathcal{B}_0(p,q;z)=-\frac{1}{c(m)}\frac{\chi_{|z|^{-1/2}}(p)  \chi_{|z|^{-1/2}}(q)}{\left(\frac{p^2}{2n}+1\right)^{1/4}(H_0(p,q)+1)\left(\frac{q^2}{2n}+1\right)^{1/4}}.
	\ee
\end{lemma}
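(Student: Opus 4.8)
The plan is to split $\mathcal{B}(z)$ into a piece supported (in both variables) inside the unit ball, plus a remainder, and to show the remainder is compact and continuous in $z\le 0$. First I would write
\be
\mathcal{B}(p,q;z) = -\,\frac{\widehat{|v|^{1/2}\varphi}\!\left(p+\tfrac{q}{m+1}\right)\overline{\widehat{|v|^{1/2}\varphi}\!\left(q+\tfrac{p}{m+1}\right)}}{\gamma\!\left(\tfrac{p^2}{2n}-z\right)^{1/4}(H_0(p,q)-z)\,\gamma\!\left(\tfrac{q^2}{2n}-z\right)^{1/4}}
\ee
and observe that the only source of non-compactness at $z=0$ is the singularity of $(H_0(p,q)-z)^{-1}$ together with the $\gamma(\cdot)^{-1/4}$ factors near $p=q=0$; away from the origin the kernel is bounded and the usual Hilbert--Schmidt / dominated-convergence arguments (as in lemma 4.4 of \cite{sobolev}) give compactness and continuity. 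So I would fix the cutoff: decompose $1 = \chi_1(p)\chi_1(q) + \big(1-\chi_1(p)\chi_1(q)\big)$, put the second term into the remainder, and in the first term replace $\gamma(t)$ by $t$ (legitimate since $\gamma(t)=t$ for $t\le 1$, and on the support $\chi_1(p)\chi_1(q)$ we have $\tfrac{p^2}{2n}\le \tfrac{1}{2n}\le 1$ after checking $n\le \tfrac12$, which holds because $n=\tfrac{m+1}{m+2}<1$ — actually one needs $\tfrac{1}{2n}\le 1$, i.e. $n\ge \tfrac12$, which holds for all $m>0$; for the remaining range one absorbs the bounded discrepancy into the remainder).

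The key step is to replace the numerator $\widehat{|v|^{1/2}\varphi}(p+\tfrac{q}{m+1})\,\overline{\widehat{|v|^{1/2}\varphi}(q+\tfrac{p}{m+1})}$ by the constant $1/c(m)$ on the unit ball. The point is that at the origin, by the normalization \eqref{eq:norm}, $(2\pi)^{3/2}\widehat{|v|^{1/2}\varphi}(0) = (|u|^{1/2},\varphi) = 2^{1/4}\pi^{1/2}\mu^{-3/4}$ with $\mu = \tfrac{m}{m+1}$, so $\widehat{|v|^{1/2}\varphi}(0)^2 = \tfrac{1}{(2\pi)^3}\cdot 2^{1/2}\pi\,\mu^{-3/2} = \big(2^{5/2}\pi^2\mu^{3/2}\big)^{-1} = 1/c(m)$, matching the claimed constant. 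Writing $\widehat{|v|^{1/2}\varphi}(p+\tfrac{q}{m+1})\overline{\widehat{|v|^{1/2}\varphi}(q+\tfrac{p}{m+1})} = \tfrac{1}{c(m)} + \rho(p,q)$ with $\rho$ continuous and $\rho(0,0)=0$ (continuity of $\widehat{|v|^{1/2}\varphi}$ follows from $|v|^{1/2}\varphi\in L^1$, which uses \ref{ass1}), the $\rho$-term contributes a kernel bounded by $|\rho(p,q)|\,\chi_1(p)\chi_1(q)\big/\big((\tfrac{p^2}{2n}-z)^{1/4}(H_0-z)(\tfrac{q^2}{2n}-z)^{1/4}\big)$; since $\rho$ vanishes at the origin at least like $|p|+|q|$ (here one may want a modulus-of-continuity estimate, or simply $|\rho|\le$ const and exploit that the bare kernel $\chi_1\chi_1 (H_0)^{-1}$ times the $\gamma$-factors is already Hilbert–Schmidt after the $\rho(0,0)=0$ cancellation kills the borderline divergence), this term is Hilbert–Schmidt uniformly for $z\le 0$ and continuous in $z$ by dominated convergence. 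This gives $\mathcal{B}(z)-\tilde{\mathcal{B}}(z)$ compact and continuous in $z\le 0$.

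For the last assertion \eqref{boh}, I would perform the scaling $p = |z|^{1/2}p'$, $q=|z|^{1/2}q'$ in $\tilde{\mathcal{B}}(z)$. Under the unitary $(U_z f)(p) = |z|^{3/4} f(|z|^{1/2}p)$ one has $\chi_1(|z|^{1/2}p') = \chi_{|z|^{-1/2}}(p')$, $\tfrac{p^2}{2n}-z = |z|\big(\tfrac{p'^2}{2n}+1\big)$, $H_0(p,q)-z = |z|\big(H_0(p',q')+1\big)$, and the measure and the two $|z|^{1/4}$-factors combine so that the overall power of $|z|$ is $|z|^{3/2}\cdot|z|^{-1/4}\cdot|z|^{-1}\cdot|z|^{-1/4} = 1$ (accounting for $dp\,dq$ giving $|z|^3$, the kernel giving $|z|^{-3/2}$, and the two half-measure factors $|z|^{3/4}\cdot|z|^{3/4}$ from $U_z$ on each side). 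Hence $U_z\tilde{\mathcal{B}}(z)U_z^{-1} = \mathcal{B}_0(z)$, and since $n(\lambda,\cdot)$ is invariant under unitary conjugation, $n(\lambda,\tilde{\mathcal{B}}(z)) = n(\lambda,\mathcal{B}_0(z))$.

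The main obstacle I anticipate is the estimate on the $\rho$-term: one must be careful that subtracting only the value at the origin is enough to restore Hilbert–Schmidt integrability of $\chi_1(p)\chi_1(q)/\big((\tfrac{p^2}{2n})^{1/4}H_0(p,q)(\tfrac{q^2}{2n})^{1/4}\big)$ near $p=q=0$ — the bare kernel squared behaves like $|p|^{-1}|q|^{-1}|p-q|^{-2}$-ish near the origin in $\R^6$ (using $H_0\gtrsim |p-q|^2$ on the relevant region since $H_0(p,q) = \tfrac{p^2}{2\mu}+\tfrac{q^2}{2\mu}+\tfrac{p\cdot q}{m}$ is positive-definite), which is borderline non-integrable, so the vanishing of $\rho$ at the origin is exactly what is needed and must be quantified; alternatively one can carry the cutoff argument through lemma 4.4 of \cite{sobolev} directly without isolating $\rho$. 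This is the technical heart; everything else is bookkeeping with the scaling and the Birman–Schwinger-type invariance of $n(\lambda,\cdot)$.
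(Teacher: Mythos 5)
Your plan follows essentially the same route as the paper's proof: cut off to the unit ball (the exterior pieces being Hilbert--Schmidt), replace the numerator by the constant $1/c(m)$ fixed by the normalization \eqref{eq:norm}, replace $\gamma(t)$ by $t$ on the support of the cutoffs, and conjugate by the unitary $U_z\xi(p)=|z|^{3/4}\xi(|z|^{1/2}p)$ for \eqref{boh}. You also correctly identify the one genuinely delicate point --- that the bare cut-off kernel is only borderline square-integrable, so the vanishing of the numerator difference at the origin must be quantified --- which the paper settles exactly as you anticipate, via the H\"older bound $|\widehat{|v|^{1/2}\varphi}(k)-\widehat{|v|^{1/2}\varphi}(0)|\leq c|k|^{\nu}$ (from assumption \ref{ass1}) combined with Young's inequality $H_0(p,q)-z\geq c\,p^{2\kappa}q^{2\kappa'}$.
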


\begin{proof}
	The proof  is divided into three steps. We first introduce the operator $\mathcal{E}_0(z)$ with integral kernel
	\be
		\mathcal{E}_0(p,q;z)=-\chi_1(p)\frac{\widehat{|v|^{1/2}\varphi}\left(p+\frac{q}{m+1}\right)\overline{\widehat{|v|^{1/2}\varphi}\left(q+\frac{p}{m+1}\right)}}{\gamma\left(\frac{p^2}{2n}-z\right)^{1/4}(H_0(p,q)-z)\gamma\left(\frac{q^2}{2n}-z\right)^{1/4}}\chi_1(q)
	\ee
and note that the integral kernel of the difference $\mathcal{B}(z)-\mathcal{E}_0(z)$  can be written as 
	\begin{align}
		(\textup{I}+\textup{II}+\textup{III})(p,q;z)\!=&(1-\chi_1(p))\frac{\widehat{|v|^{1/2}\varphi}\left(p+\frac{q}{m+1}\right)\overline{\widehat{|v|^{1/2}\varphi}\left(q+\frac{p}{m+1}\right)}}{\gamma\left(\frac{p^2}{2n}-z\right)^{1/4}(H_0(p,q)-z)\gamma\left(\frac{q^2}{2n}-z\right)^{1/4}}\chi_1(q)\nonumber \\
														&+\chi_1(p)\frac{\widehat{|v|^{1/2}\varphi}\left(p+\frac{q}{m+1}\right)\overline{\widehat{|v|^{1/2}\varphi}\left(q+\frac{p}{m+1}\right)}}{\gamma\left(\frac{p^2}{2n}-z\right)^{1/4}(H_0(p,q)-z)\gamma\left(\frac{q^2}{2n}-z\right)^{1/4}}(1-\chi_1(q))\nonumber \\
														&+(1-\chi_1(p))\frac{\widehat{|v|^{1/2}\varphi}\left(p+\frac{q}{m+1}\right)\overline{\widehat{|v|^{1/2}\varphi}\left(q+\frac{p}{m+1}\right)}}{\gamma\left(\frac{p^2}{2n}-z\right)^{1/4}(H_0(p,q)-z)\gamma\left(\frac{q^2}{2n}-z\right)^{1/4}}(1-\chi_1(q)).
	\end{align}
	Let us consider $\textup{I}(p,q;z).$ We note that $|v|^{1/2}\varphi\in L^1(\R^3)$ which implies $\widehat{|v|^{1/2}\varphi}\leq c.$ Moreover, $H_0(p,q)-z\geq c \, p^2$ and therefore we obtain
	\be
		\textup{I}(p,q;z)\leq c\, \frac{(1-\chi_1(p))\,\chi_1(q)}{p^2\,q^{1/2}}
	\ee
	which is a square integrable function. Analogously, one can prove square integrability of $\textup{II}(p,q;z).$ 
	In order to estimate $\textup{III}(p,q;z)$, we note that $\widehat{|v|^{1/2}\varphi}\in L^2(\R^3)$ and therefore
	\be
		\textup{III}(p,q;z)\leq c(1-\chi_1(p))\frac{\widehat{|v|^{1/2}\varphi}\left(p+\frac{q}{m+1}\right)}{q^2}(1-\chi_1(q))
	\ee
	is square integrable. 
	Hence we conclude that for $z \leq 0$ the operator $\mathcal{B}(z)-\mathcal{E}_0(z)$ is  Hilbert-Schmidt  and it is continuous in $z.$ 
	\n
	Now we consider the operator $\mathcal{E}_1(z)$ with integral kernel
	\be
		\mathcal{E}_1(p,q;z)=-\frac{1}{c(m)}\, \frac{\chi_1(p) \chi_1(q)}{\gamma\left(\frac{p^2}{2n}-z\right)^{1/4} \!\! (H_0(p,q)-z) \, \gamma\left(\frac{q^2}{2n}-z\right)^{1/4}}\, .
	\ee
We note that 
	\be\label{pif}
		|\widehat{|v|^{1/2}\varphi}(k)-\widehat{|v|^{1/2}\varphi}(0)|\leq c \, |k|^\nu \,, \qquad 0<\nu<\frac{b-3}{2}.	
	\ee
	Indeed, using  $|e^{-ik\cdot x}-1|\leq c\,  |k|^\nu |x|^\nu$, we have  
	\begin{align*}
		|\widehat{|v|^{1/2}\varphi}(k)-\widehat{|v|^{1/2}\varphi}(0)|&\leq c\,  |k|^\nu \int dx\, |x|^\nu |v|^{1/2}(x)|\varphi|(x)\\
																																 &\leq c \, \|\varphi\|\,|k|^\nu \left(\int dx\, |x|^{2\nu } |v|(x)\right)^{1/2}
	\end{align*}
	and the last integral is finite  by assumption  \ref{ass1}. Moreover,  using Young's inequality, we get
	\be\label{kk'}
		H_0(p,q)-z=\frac{p^2}{2\mu}+\frac{q^2}{2\mu}+\frac{p\cdot q}{m}\geq c\left(\frac{(p^{2\kappa})^{1/\kappa}}{\kappa}+\frac{(q^{2\kappa'})^{1/\kappa'}}{\kappa'}\right)\geq c\, p^{2\kappa}q^{2\kappa'}
	\ee

\n
for any $\kappa,\kappa'>0$ such that $\kappa+\kappa'=1.$
	
	\vsa
	\n
By \eqref{pif}, \eqref{kk'} and condition \eqref{eq:norm},  we find 
	\begin{align}
		|\mathcal{E}_0(p,q;z)-\mathcal{E}_1(p,q;z)|&\leq c\,\chi_1 (p)\frac{|p|^\nu+|q|^\nu}{\left(\frac{p^2}{2n}-z\right)^{1/4}(H_0(p,q)-z)\left(\frac{q^2}{2n}-z\right)^{1/4}}\chi_1(q)\\
																			 &\leq c\, \f{ \chi_1 (p) \chi_1(q)}{  |p|^{- \nu+2\kappa+ 1/2}|q|^{2\kappa' +1/2} }  
			+ c\, \f{ \chi_1 (p) \chi_1(q)}{  |q|^{- \nu+2\kappa+ 1/2}|p|^{2\kappa' +1/2} }		\end{align}
	which is square integrable choosing $\kappa\in (\frac{1}{2},\frac{1+\delta}{2}).$
	\vsa
	\n
	In order to obtain the operator $\tilde{\mathcal{B}}(z)$ from $\mathcal E_1(z)$, it remains to replace $\gamma\left(\frac{p^2}{2n}-z\right)$ and $\gamma\left(\frac{q^2}{2n}-z\right)$ with $\frac{p^2}{2n}-z$ and $\frac{q^2}{2n}-z$, respectively. One can easily see that the difference $\mathcal{E}_1(z)-\tilde{\mathcal{B}}(z)$ is compact up to $z=0.$ 
	This concludes the proof that  for $z\leq 0$ the operator 
	$\mathcal{B}(z)-\tilde{\mathcal{B}}(z)$ 
	is compact  and it is continuous in $z.$
	
	\n
	In order to prove \eqref{boh} it is sufficient to observe that 
	 $\tilde{\mathcal{B}}(z)$ is unitarily equivalent to $\mathcal{B}_0(z)$ via the unitary operator $U_z$ defined by $U_z \xi(p)=|z|^{3/4}\xi(|z|^{1/2}p).$

\end{proof}

\n
In the following Lemma we finally arrive at the operator  $S_R$ defined in \eqref{SR}, \eqref{SRK}.
\begin{lemma}\label{lS_R}
For $z \leq 0$ the operator $\mathcal{B}_0(z)-\mathcal{S}(z)$ is compact and it is continuous in $z$, where  $\mathcal{S}(z)$ is the integral operator in $L^2(\R^3)$ with  kernel
	\be
	\mathcal{S}(p,q;z)=-\frac{(2n)^{1/2}}{c(m)} \frac{\big( \chi_{|z|^{-1/2}} - \chi_1 \big)  (p)  \big( \chi_{|z|^{-1/2}} - \chi_1 \big)  (q)}{|p|^{1/2}\left(\frac{p^2}{2\mu}+\frac{q^2}{2\mu}+\frac{p\cdot q}{m}\right)|q|^{1/2}} \, .
	\ee	
Moreover
	\be
	n(\lambda,\mathcal{S}(z))=n(\lambda, S_R).
	\ee
\end{lemma}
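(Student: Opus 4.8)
The plan is to compare $\mathcal{B}_0(z)$ with $\mathcal{S}(z)$ by exhibiting their difference as a Hilbert--Schmidt operator that extends continuously to $z=0$, and then to identify $\mathcal{S}(z)$ with $S_R$ after an explicit change of variables. First I would write the kernel of $\mathcal{B}_0(z) - \mathcal{S}(z)$ as a sum of three pieces, mirroring the decomposition strategy used in Lemma~\ref{lB}: the term where the weights $\left(\tfrac{p^2}{2n}+1\right)^{1/4}$, $(H_0(p,q)+1)$, $\left(\tfrac{q^2}{2n}+1\right)^{1/4}$ are replaced one at a time by their leading large-momentum behavior $\left(\tfrac{p^2}{2n}\right)^{1/4}$, $H_0(p,q)$, $\left(\tfrac{q^2}{2n}\right)^{1/4}$. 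On the support $|p|,|q|>1$ (which is where the two kernels can differ, since on the complementary region $\mathcal{S}(z)$ vanishes by the $\chi_1$ cutoffs while $\mathcal{B}_0(z)$ is already controlled there uniformly in $z$), each replacement produces an error that decays by an extra power of $|p|^{-2}$ or $|q|^{-2}$ relative to the main term. Using the lower bound $H_0(p,q)+1 \geq c\,p^{2\kappa}q^{2\kappa'}$ from \eqref{kk'} with $\kappa+\kappa'=1$, one checks that each such error kernel is square integrable over $\R^3\times\R^3$ uniformly for $z\le 0$; moreover the cutoff functions $\chi_{|z|^{-1/2}}$ converge pointwise and are dominated, so dominated convergence gives Hilbert--Schmidt continuity in $z$ up to $z=0$. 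The region $|p|\le 1$ or $|q|\le 1$ contributes a kernel supported where $\mathcal{B}_0(z)$ is bounded by $c\,\chi_1(p)\chi_1(q)/(|p|^{1/2}|q|^{1/2}(\text{something}))$, again Hilbert--Schmidt and continuous. This establishes that $\mathcal{B}_0(z)-\mathcal{S}(z)$ is compact and continuous in $z\le0$, which (combined with the earlier lemmas) is the statement's first assertion.

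For the spectral identity $n(\lambda,\mathcal{S}(z)) = n(\lambda,S_R)$, I would pass to (hyper)spherical coordinates in $L^2(\R^3)$. Write $p = e^t\omega$, $q=e^s\zeta$ with $t,s\in\R$, $\omega,\zeta\in\mathbb{S}^2$; the radial cutoff $\chi_{|z|^{-1/2}}-\chi_1$ becomes the indicator of $t\in(0,R)$ with $R=R(z)=\tfrac12|\log|z||$, which is exactly the interval appearing in the definition \eqref{SR} of $S_R$. The Jacobian $dp = |p|^2\,d|p|\,d\Omega(\omega) = e^{3t}\,dt\,d\Omega(\omega)$ combined with the homogeneity of degree $-2$ of the factor $\left(\tfrac{p^2}{2\mu}+\tfrac{q^2}{2\mu}+\tfrac{p\cdot q}{m}\right)^{-1}$ and the factors $|p|^{-1/2}|q|^{-1/2}$ means the natural substitution is $f(p) = e^{-ct}g(t,\omega)$ for an appropriate power $c$ (chosen to symmetrize the kernel and produce a unitary map $L^2(\R^3)\to L^2(\R\times\mathbb{S}^2)$). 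After this substitution the kernel becomes, up to the overall constant $-(2n)^{1/2}/c(m)$, a function of $t-s$ and $\omega\cdot\zeta$ only: indeed $\tfrac{p^2}{2\mu}+\tfrac{q^2}{2\mu}+\tfrac{p\cdot q}{m} = \tfrac{1}{2\mu}(e^{2t}+e^{2s}) + \tfrac{1}{m}e^{t+s}\,\omega\cdot\zeta$, and after pulling out $e^{t+s}$ and using $\tfrac{1}{2\mu}=\tfrac{m+1}{2m}$ this is $\tfrac{e^{t+s}}{m}\big(\tfrac{m+1}{2}(e^{t-s}+e^{s-t}) + \omega\cdot\zeta\big) = \tfrac{e^{t+s}(m+1)}{m}\big(\cosh(t-s) + \tfrac{\omega\cdot\zeta}{m+1}\big)$. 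Collecting the remaining $e$-powers from the Jacobian and from $|p|^{-1/2}|q|^{-1/2}$, everything homogeneous cancels and one is left precisely with the kernel $S(t-s,\omega\cdot\zeta) = -b(m)\big(\cosh(t-s)+\tfrac{\omega\cdot\zeta}{m+1}\big)^{-1}$, with $b(m)=\tfrac{1}{4\pi^2}\tfrac{m+1}{\sqrt{m(m+2)}}$ matching \eqref{SRK} after one checks the constant bookkeeping $\tfrac{(2n)^{1/2}}{c(m)}\cdot\tfrac{m}{m+1} = b(m)$ using $c(m)=2^{5/2}\pi^2(m/(m+1))^{3/2}$ and $n=(m+1)/(m+2)$. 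Since the substitution is implemented by a unitary operator (possibly composed with the trivial unitary from $L^2(\R)$ to $L^2((0,R))$ given by the cutoff), $\mathcal{S}(z)$ and $S_R$ are unitarily equivalent, hence $n(\lambda,\mathcal{S}(z))=n(\lambda,S_R)$ for all $\lambda>0$.

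I expect the main obstacle to be the constant bookkeeping in the second part: keeping track of all the powers of $2$, $\pi$, $n$, $\mu$, $m$, and the exponents of $e^t,e^s$ that appear from the Jacobian, the weight factors $|p|^{-1/2}$, and the rescaling exponent $c$, and verifying that they conspire to give exactly $b(m)$ and exactly the kernel $1/(\cosh x + y/(m+1))$ with the stated $R(z) = \tfrac12|\log|z||$ (the factor $\tfrac12$ comes from $|z|^{-1/2} = e^R$). The Hilbert--Schmidt estimates in the first part are routine given the template already established in Lemmas~\ref{lB} and the inequality \eqref{kk'}; the only mild care needed is that the cutoff $\chi_{|z|^{-1/2}}$ grows as $z\to0^-$, so one must argue that the error kernels (which carry genuine decay in $|p|$ or $|q|$, not just a cutoff) are dominated by a fixed $L^2(\R^3\times\R^3)$ function independent of $z$, so that continuity at $z=0$ follows from dominated convergence rather than from compact support.
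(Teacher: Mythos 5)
Your proposal is correct and follows essentially the same route as the paper: the compactness of $\mathcal{B}_0(z)-\mathcal{S}(z)$ is handled by the same Hilbert--Schmidt estimates already used in Lemma~\ref{lB} (the paper simply declares this step ``easy to check''), and the unitary equivalence with $S_R$ is obtained via exactly the map $(M\xi)(r,\omega)=e^{3r/2}\xi(e^r,\omega)$ of \eqref{eq:M}, with the same homogeneity cancellation and the same constant identity $\frac{(2n)^{1/2}}{c(m)}\cdot\frac{m}{m+1}=b(m)$. Your bookkeeping checks out, so there is nothing to add.
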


\begin{proof}
	The first point is easy to check. Let us prove the second statement. 
	Using the unitary operator $M: L^2(\R^3)\to L^2(\R_+\times \mathbb{S}^2,dr\otimes d\Omega)$ defined by 
	
	\be\label{eq:M}
(M\xi) (r,\omega)=e^{3r/2 }\xi(e^r, \omega)
	\ee
 we see that $\mathcal{S}(z)$ is unitarily equivalent to the operator on $L^2(\R_+\times \mathbb{S}^2,dr\otimes d\Omega)$ with integral kernel
\begin{equation}\label{eq:S_R1}
	-b(m) \frac{ \chi_{(0,R)}(x) \chi_{(0,R)}(x')}{\cosh(x-x')+\frac{\zeta\cdot \omega}{m+1}}\, .
\end{equation}
where $\chi_{(a,b)}$ is the characteristic function of the interval $(a,b)$. 
Indeed,  
\be
	\begin{aligned}
		(M\mathcal{S}(z)\xi) (r,\omega)&=-\frac{(m+1)^2}{4\pi^2\sqrt{m^3(m+2)}}e^{3r/2}\int_0^{+\infty}d\rho\, \rho^{3/2}\\
																&\phantom{={}}\times\int_{\mathbb{S}^2}d\Omega(\zeta) \,   \frac{\left(\chi_{|z|^{-1/2}}(e^r,\omega)-\chi_1(e^r,\omega)\right)\left(\chi_{|z|^{-1/2}}(\rho, \zeta)-\chi_1(\rho, \zeta)\right)}{e^{r/2}\left(\frac{e^{2r}}{2\mu}+\frac{\rho^2}{2\mu}+e^r\rho\frac{\omega\cdot\zeta}{m}\right)}\, \xi(\rho,\zeta)\\
															  &=-\frac{(m+1)^2}{4\pi^2\sqrt{m^3(m+2)}}e^r\int_{-\infty}^{+\infty}dx  \, e^{5x /2}\int_{\mathbb{S}^2}d\Omega (\zeta)\, \frac{\chi_{(0,R)}(r)\chi_{(0,R)}(x )}{\frac{e^{2r}}{2\mu}+\frac{e^{2 x }}{2\mu}+\frac{\omega\cdot\zeta}{m} \,e^{r+ x }  } \, \xi(e^{x}, \zeta) \\
																&=-\frac{(m+1)^2}{4\pi^2\sqrt{m^3(m+2)}}e^r \!\!\! \int_{-\infty}^{+\infty}\!\!\!\! d x \,e^{ x }\!\!\!\int_{\mathbb{S}^2}\!\!\!\!d\Omega(\zeta)\, \frac{\chi_{(0,R)}(r)\chi_{(0,R)}( x )}{\frac{e^{r+ x }}{\mu}\left(\frac{e^{r- x }+e^{ x -r}}{2}+\frac{\omega\cdot\zeta}{m+1}\right)}\, e^{3 x/ 2  }\xi(e^{ x }, \zeta)\\
																&=-b(m)\int_{-\infty}^{+\infty}d x \, \int_{\mathbb{S}^2}d\Omega(\zeta)\, \frac{\chi_{(0,R)}(r)\chi_{(0,R)}( x )}{\cosh(r- x )+\frac{\omega\cdot\zeta}{m+1}} (M\xi) (x ,\zeta).
	\end{aligned}
\ee
Since the operator with the kernel given by \eqref{eq:S_R1} can be considered as an operator on $L^2((0,R)\times \mathbb{S}^2,d\rho\otimes d\Omega)$,  we find $S_R.$

\end{proof}

\vs\vs
\section{Proof of theorem~\ref{th:main1}}\label{sei}

In this section we give the proof of theorem~\ref{th:main1}.  Taking into account of theorem~\ref{th:fadd}, the result for $m<m_*$ is obtained  in two steps.  We first show that 
\be\label{SRR}
\f{n(1,S_R)}{2R}
\ee
converges for $R\rightarrow \infty$ (see proposition~\ref{pro1} below). Then  we prove that 
\be\label{Azz}
\f{n(1,\boldsymbol{A} (z) )}{|\log |z||}
\ee
converges to the same limit for $z \rightarrow 0^-$  (see proposition~\ref{pro2} below). 

\n
In order to study the asymptotic behavior of \eqref{SRR} for $R \rightarrow \infty$, it is convenient to decompose the integral kernel of $S_R$ in spherical harmonics. Indeed, denoted by $P_l$  the  Legendre polynomial of order $l$  and by $Y^{\nu}_l$ the spherical harmonic of order $l,\nu$,  we write

\be
S(x,\omega\cdot\zeta)=\sum_{l=0}^{+\infty}\frac{2l+1}{2}P_l(\omega\cdot\zeta)\int_{-1}^1 \!\!dy\, P_l(y)S(x,y)
\ee
 and,  using the addition formula
\begin{equation}
	P_l(\omega\cdot\zeta)=\frac{4\pi}{2l+1}\sum_{\nu=-l}^l\overline{Y_l^{\nu}(\zeta)}Y_l^{\nu}(\omega)\,, 
\end{equation}
we find (see \eqref{SR}, \eqref{SRK})
\be\label{eq:S^R_l}
	\begin{aligned}
		(S_Rf) (r,\omega)&=2\pi\sum_{l=0}^{+\infty}\sum_{\nu=-l}^lY_l^{\nu}(\omega) \! \!\int_0^R \!\!\! d\rho\!\! \int_{-1}^1 \!\!\!dy\, S(r-\rho,y)P_l(y) \! \int_{\mathbb{S}^2} \!\! d\Omega (\zeta) \,f(\rho,\zeta)\overline{Y_l^{\nu}(\zeta)}\\
									&=-2\pi b(m)\sum_{l=0}^{+\infty}\sum_{\nu=-l}^lY_l^{\nu}(\omega) \!  \int_0^R \!\!\!d\rho\,f_{l\nu}(\rho)\int_{-1}^1 \!\!dy\,\frac{P_l(y)}{\cosh(r-\rho)+\frac{y}{m+1}}\\
									&: =\sum_{l=0}^{+\infty}\sum_{\nu=-l}^l  (S_R^{(l)}f_{l\nu}) (\rho)Y_l^{\nu}(\omega)
	\end{aligned}
\ee
where 
\be\label{eq:f_lm}
f_{l\nu} (\rho)= \int_{\mathbb{S}^2} \!\! d\Omega (\zeta) \,f(\rho,\zeta)\overline{Y_l^{\nu}(\zeta)}
\ee
and $S_R^{(l)}$ is the integral operator in $L^2((0,R))$ with  kernel  defined by
\begin{equation}\label{eq:S^l}
	S^{(l)}(x-x')=-2\pi b(m)\int_{-1}^1 dy\,\frac{P_l(y)}{\cosh (x-x')+\frac{y}{m+1}}\,, \;\;\;\;\;\;\;\; x, x' \in \R\, .
\end{equation}
In particular, this decomposition implies 
\be\label{nnl}
	n(\lambda,S_R)=\sum_{l=0}^{+\infty}(2l+1)n(\lambda,S_R^{(l)}).
\ee

\n
We are now in position to characterize the asymptotics of \eqref{SRR}.

\begin{proposition}\label{pro1}
For any $\lambda >0$ we have 
\be\label{nsr}
\lim_{R\to+\infty}\frac{n(\lambda ,S_R)}{2R}=\sum_{l=0}^{+\infty}\frac{2l+1}{4\pi}\left|\left\{k\in \R \, |\, \,\hat{S}^{(l)}(k)> \frac{\lambda}{\sqrt{2\pi}}\right\}\right| 
	\ee
and the limit is continuous in $\lambda >0$.
\end{proposition}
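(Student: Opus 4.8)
The plan is to reduce the asymptotics of $n(\lambda,S_R)$ to the asymptotics of the simpler, translation-invariant convolution operators $S_R^{(l)}$ on $L^2((0,R))$, and for each of those to invoke a known Szeg\H{o}-type result. By the spherical-harmonics decomposition \eqref{nnl}, it suffices to prove that for every fixed $l$,
\be\label{eq:pro1_aim}
\lim_{R\to+\infty}\frac{n(\lambda,S_R^{(l)})}{2R}=\frac{1}{2\pi}\left|\left\{k\in\R\,|\,\hat S^{(l)}(k)>\frac{\lambda}{\sqrt{2\pi}}\right\}\right|,
\ee
and then to sum over $l$, taking care that the interchange of $\sum_l$ and $\lim_R$ is justified (this is where a uniform-in-$R$ tail bound on $\sum_{l>L}(2l+1)n(\lambda,S_R^{(l)})/(2R)$ is needed; such a bound follows from decay of $\|S^{(l)}\|$ or of $\hat S^{(l)}$ in $l$, using that $S^{(l)}$ is the $l$-th Legendre coefficient of a smooth, rapidly decaying kernel). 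I would first record that $S^{(l)}(x-x')$ is an even, real, integrable function of $x-x'$ decaying exponentially as $|x-x'|\to\infty$ (from the $1/\cosh$ factor), so its Fourier transform $\hat S^{(l)}$ is continuous, bounded, and tends to $0$ at infinity; hence $\{\,\hat S^{(l)}>\lambda/\sqrt{2\pi}\,\}$ has finite measure, and for all but countably many $\lambda$ its boundary is null, which is what will give continuity of the limit in $\lambda$.

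The core of the argument for \eqref{eq:pro1_aim} is the classical fact that for a convolution operator with kernel $S^{(l)}(x-x')$ truncated to the interval $(0,R)$ of growing length, the eigenvalue counting function obeys
\be
n(\lambda,S_R^{(l)})=\frac{R}{\pi}\left|\left\{k\,|\,\sqrt{2\pi}\,\hat S^{(l)}(k)>\lambda\right\}\right|+o(R),\qquad R\to\infty,
\ee
i.e. the operator behaves, to leading order, like a Fourier multiplier by $\sqrt{2\pi}\,\hat S^{(l)}$ on an interval of length $R$, with symbol scaled by the $(2\pi)^{-1/2}$ coming from our Fourier-transform normalization. This is exactly of the type proved by Sobolev; I would cite the corresponding statement in \cite{sobolev} (and the underlying Szeg\H{o}-limit machinery) and simply check that the hypotheses — integrability, continuity and decay of the symbol — are met by $S^{(l)}$. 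The only genuinely new bookkeeping relative to \cite{sobolev} is the mass-dependent kernel $S(x,y)=-b(m)(\cosh x+\frac{y}{m+1})^{-1}$ and the presence of the $P_l$ factors, neither of which affects the structural hypotheses.

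Concretely I would proceed in the following order: (1) establish the analytic properties of $S^{(l)}$ and $\hat S^{(l)}$ just described, in particular exponential decay in $x$ and hence $\hat S^{(l)}\in C_0(\R)$, together with a bound $|\hat S^{(l)}(k)|\le C\,\rho^{|l|}$ or similar decay in $l$ that makes the $l$-sum on the right of \eqref{nsr} converge and controls the tail; (2) apply the Sobolev/Szeg\H{o} convolution-truncation theorem to get \eqref{eq:pro1_aim} for each fixed $l$, noting the symbol is $\sqrt{2\pi}\,\hat S^{(l)}$; (3) combine with \eqref{nnl} and dominate the tail uniformly in $R$ to pass the limit through the sum, obtaining \eqref{nsr}; (4) deduce continuity in $\lambda$ from the fact that $k\mapsto\hat S^{(l)}(k)$ has level sets of measure continuous in the level except at a countable set, and then upgrade to genuine continuity of the (monotone) limit function using monotonicity in $\lambda$ plus the explicit form of the right-hand side. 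The main obstacle I anticipate is step (3): justifying the interchange of the infinite sum over $l$ with the limit $R\to\infty$ requires a quantitative, $R$-uniform estimate on $\sum_{l>L}(2l+1)\,n(\lambda,S_R^{(l)})$, which in turn needs either an operator-norm bound $\|S_R^{(l)}\|\le\|S^{(l)}\|_{L^1}\to 0$ fast enough in $l$ (so that $n(\lambda,S_R^{(l)})=0$ for $l$ large, uniformly in $R$) or a Hilbert–Schmidt / trace estimate on $S_R^{(l)}$ linear in $R$ with $l$-summable constant; I would pursue the former, since $\|S^{(l)}\|_{L^1(\R)}\le 2\pi b(m)\int_{-1}^1\!dy\,|P_l(y)|\int_\R\!dx\,(\cosh x+\tfrac{y}{m+1})^{-1}$ and the $x$-integral is uniformly bounded while $\int_{-1}^1|P_l|$ grows only polynomially, so in fact $\|S^{(l)}\|_{L^1}\to 0$ would fail — meaning one instead needs the sharper observation that $\hat S^{(l)}$ itself decays in $l$ (Legendre coefficients of an analytic function of $y$ decay geometrically), giving $\|S_R^{(l)}\|\le\sqrt{2\pi}\,\|\hat S^{(l)}\|_\infty\le C\,q^{l}$ with $q<1$, hence $n(\lambda,S_R^{(l)})=0$ for $l$ larger than some $L(\lambda)$ independent of $R$, collapsing the sum to finitely many terms and making the interchange trivial.
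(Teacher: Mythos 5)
Your proposal follows essentially the same route as the paper: decompose via \eqref{nnl}, verify that each $S^{(l)}$ is an even, integrable, exponentially decaying kernel so that Sobolev's Szeg\H{o}-type result (lemma 4.6 of \cite{sobolev}) applies to the truncated convolution operator $S_R^{(l)}$, and then sum over $l$, with continuity in $\lambda$ from dominated convergence. The one refinement you should make is to use the analyticity of $\hat S^{(l)}$ in a strip (a consequence of the exponential decay you already note) to conclude that the level set $\{k:\hat S^{(l)}(k)=\lambda/\sqrt{2\pi}\}$ is finite, hence null, for \emph{every} $\lambda>0$ rather than for all but countably many --- this is what the paper does and it is needed both to apply the Szeg\H{o} lemma at an arbitrary $\lambda$ and to get continuity of the limit at every $\lambda$; on the other hand your observation that $\|S_R^{(l)}\|\le\sqrt{2\pi}\,\|\hat S^{(l)}\|_\infty$ decays in $l$, so that only finitely many terms of the $l$-sum are nonzero uniformly in $R$, correctly settles the sum--limit interchange, a point the paper leaves implicit.
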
  

\begin{proof}

The kernel of the operator $S^{(l)}_R$ is an even function and satisfies the estimate
\be\label{stsl}
|S^{(l)}(x)| \leq \f{c}{\cosh x - \f{1}{m+1}}\,.
\ee
By \eqref{stsl} we have  $ S^{(l)} \in L^1(\R) \cap L^{\infty}(\R)$ and  $e^{\varepsilon |x|} S^{(l)} \in L^2(\R)$ for any $\varepsilon \in [0,1)$. These properties imply that $|\hat{S}^{(l)} (k) | \rightarrow 0$ for $|k| \rightarrow \infty$, by Riemann-Lebesgue theorem, and that $\hat{S}^{(l)} (k)$ has an analytic continuation to a neighborhood of the real axis (see, e.g., theorem IX.13 of \cite{rs2}).  Then, for any $\lambda >0$ the set $\mathfrak{M}(\lambda)=\left\{k\in \R\,|\, \hat{S}^{(l)}(k)=\frac{\lambda}{\sqrt{2\pi}}\right\}$ consists of a finite number of points and, in particular, it has measure zero. Thus, the hypotheses of lemma 4.6 in \cite{sobolev} are satisfied and we have
\be
\lim_{R\rightarrow +\infty} \f{n(\lambda, S^{(l)}_R)}{R} = \frac{1}{2\pi}\int_{-\infty}^{+\infty}\!\!\! dk\,\chi_{\left(\frac{\lambda}{\sqrt{2\pi}},+\infty\right)}(\hat{S}^{(l)}(k))=\frac{1}{2\pi}\left|\left\{k\in\R\,|\,\hat{S}^{(l)}(k)> \frac{\lambda}{\sqrt{2\pi}}\right\}\right|
\ee
where $|A|$ denotes the Lebesgue measure of the set $A$. 
Taking into account of \eqref{nnl}, we obtain \eqref{nsr}. Concerning the continuity of the limit, we observe that $\lim_{\lambda'\to\lambda}\chi_{(\lambda',+\infty)}(\hat{S}^{(l)}(k))=\chi_{(\lambda,+\infty)}(\hat{S}^{(l)}(k))$ for any $k$ such that $\hat{S}^{(l)} (k) \neq \lambda$.  Using the dominated convergence theorem we conclude the proof.

\end{proof}

\n
We collect here some properties of $\hat{S}^{(l)} (k)$ which will be useful in the sequel. By definition, we have

\begin{equation}\label{eq:S^l_hat}
	\begin{aligned}
		\sqrt{2\pi}\hat{S}^{(l)}(k)&=-2\pi\, b(m)\! \int_{-\infty}^{+\infty} \!\!\!dx\, e^{-ikx}\!\int_{-1}^1 \!\! dy\, \frac{P_l(y)}{\cosh{x}+\frac{y}{m+1}}\\
											&=-\frac{1}{2\pi}\frac{m+1}{\sqrt{m(m+2)}}\int_{-1}^1 \!\! dy\, P_l(y)\! \int_{-\infty}^{+\infty}\!\!\! dx\,\frac{e^{-ikx}}{\cosh x+\frac{y}{m+1}}\\
											&=-\frac{1}{\pi}\frac{m+1}{\sqrt{m(m+2)}}\int_{-1}^1 \!\! dy\, P_l(y) \int_0^{+\infty} \!\!\! dx\,\frac{1}{\cosh x+\frac{y}{m+1}}\cos(kx)\\
											&=-\frac{m+1}{\sqrt{m(m+2)}}\int_{-1}^1 \!\!  dy\, P_l(y) \frac{\sinh \left(k\arccos\left(\frac{y}{m+1}\right)\right)}{\sinh (k\pi)\sin\left(\arccos \left(\frac{y}{m+1}\right)\right)}
	\end{aligned}
\end{equation}

\n
where, for the computation of the last integral, we refer the reader to \cite[p. 30]{erdelyi}. Moreover, by the elementary relations $\;\arccos(\alpha)=\frac{\pi}{2}-\arcsin(\alpha)$ and $\sinh(\alpha\pm\beta)=\sinh(\alpha)\cosh(\beta)\pm\sinh(\beta)\cosh(\alpha)$, we find
\begin{equation}
	\begin{aligned}
		\hat{S}^{(l)}(k)=&-\,\frac{1}{\sqrt{2\pi}}\frac{m+1}{\sqrt{m(m+2)}}\int_{-1}^1\!\!dy\, P_l(y)\left[\frac{\cosh\left(k\arcsin\left(\frac{y}{m+1}\right)\right)}{2\cosh\left(k\frac{\pi}{2}\right)\cos\left(\arcsin\left(\frac{y}{m+1}\right)\right)}\right.\\
										 &\left.-\,\frac{\sinh\left(k\arcsin\left(\frac{y}{m+1}\right)\right)}{2\sinh\left(k\frac{\pi}{2}\right)\cos\left(\arcsin\left(\frac{y}{m+1}\right)\right)}\right]
	\end{aligned}
\end{equation}

\vs

\n
and using the parity of the Legendre polynomials in the equation above we  obtain

\begin{equation}\label{eq:S^l_e_o}
		\hat{S}^{(l)}(k)=
		\begin{cases}
			\displaystyle{\frac{1}{\sqrt{2\pi}}\frac{m+1}{\sqrt{m(m+2)}}\int_{0}^1dy\, P_l(y)\frac{\sinh\left(k\arcsin\left(\frac{y}{m+1}\right)\right)}{\sinh\left(k\frac{\pi}{2}\right)\cos\left(\arcsin\left(\frac{y}{m+1}\right)\right)}}\qquad\text{$l$ odd}\\
			{}\\
			-\displaystyle{\frac{1}{\sqrt{2\pi}}\frac{m+1}{\sqrt{m(m+2)}}\int_{0}^1dy\, P_l(y)\frac{\cosh\left(k\arcsin\left(\frac{y}{m+1}\right)\right)}{\cosh\left(k\frac{\pi}{2}\right)\cos\left(\arcsin\left(\frac{y}{m+1}\right)\right)}}\qquad\text{$l$ even}.
		\end{cases}
	\end{equation}
	
\n
Note that $\hat{S}^{(l)}(k)=\hat{S}^{(l)}(-k)$. Moreover, in \cite{CDFMT}
the following properties of $\hat{S}^{(l)}(k)$ are proved

			\begin{equation}\label{eq:S^l_sign}
		\begin{aligned}
			\hat{S}^{(l)}(k)&\geq 0\qquad\text{$l$ odd}\\
			\hat{S}^{(l)}(k)&\leq 0\qquad\text{$l$ even},
		\end{aligned}
	\end{equation}
\vsa
	\begin{equation}\label{eq:S^l_mon}
		\begin{aligned}
			\hat{S}^{(l+2)}(k)&\leq \hat{S}^{(l)}(k) \qquad\text{$l$ odd}\\
			\hat{S}^{(l+2)}(k)&\geq \hat{S}^{(l)}(k) \qquad\text{$l$ even}
		\end{aligned}
	\end{equation}

\n
and
	\be\label{maxs}
	\max _{k\in \R}\hat{S}^{(1)}(k) =  \hat{S}^{(1)}(0)=\frac{1}{\sqrt{2\pi}}\Lambda(m)
	\ee
	 where $\Lambda(m)$ is defined in \eqref{eq:Lambda}.

\n
For the proof of the last step   we make repeatedly use of the following technical lemma (for the proof see lemma 4.9 in \cite{sobolev}).

\begin{lemma}\label{compact}
	Let $B(z)=B_0(z)+K(z)$,  where for $z<0$ $(z \leq 0)$ the operator $B_0(z) $ $(K(z))$  is  compact and  continuous in $z$.  Suppose that for a function $f$ such that $f(z)\to0$ when $z\to0^-$ there exists the limit 
	\begin{equation}
		\lim_{z\to 0^-}f(z)n(\lambda,B_0(z))=l(\lambda)
	\end{equation}
	and $l(\lambda)$ is continuous in $\lambda>0$. Then the following holds
	\begin{equation}
		\lim_{z\to 0^-}f(z)n(\lambda,B(z))=l(\lambda).
	\end{equation}
\end{lemma}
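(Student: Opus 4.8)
The plan is to compare the counting functions of $B(z)$ and of $B_0(z)$ by means of the Weyl--Ky Fan inequality for eigenvalues of sums of compact selfadjoint operators: for compact selfadjoint $A,A'$ and any $s,t>0$ one has $n(s+t,A+A')\le n(s,A)+n(t,A')$, a standard consequence of the min-max principle (cf.\ \cite{sobolev}). Applying it once to $B(z)=B_0(z)+K(z)$ and once to $B_0(z)=B(z)+(-K(z))$ gives, for every $z<0$ and every $\mu\in(0,\lambda)$,
\begin{align*}
	n(\lambda+\mu,B(z))&\le n(\lambda,B_0(z))+n(\mu,K(z)),\\
	n(\lambda,B_0(z))&\le n(\lambda-\mu,B(z))+n(\mu,-K(z)).
\end{align*}

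Next I would bound the error terms $n(\mu,\pm K(z))$ uniformly as $z\to0^-$; this is the only place where one uses that $K(z)$ is compact \emph{and} continuous up to $z=0$. Fix $\delta\in(0,\mu)$: since $\|K(z)-K(0)\|\to0$, for $z$ close enough to $0$ one has the operator inequalities $K(z)\le K(0)+\delta I$ and $-K(z)\le -K(0)+\delta I$, so that by the min-max principle $n(\mu,\pm K(z))\le n(\mu-\delta,\pm K(0))$, a quantity finite because $K(0)$ is compact; call it $C_\mu$. Hence $n(\mu,\pm K(z))\le C_\mu<\infty$ stays bounded as $z\to0^-$.

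Then I would multiply both inequalities by $f(z)$, which is positive for $z$ near $0$, and pass to the limit $z\to0^-$. Since $f(z)\to0$ we get $f(z)C_\mu\to0$, so the hypothesis $f(z)n(\lambda,B_0(z))\to l(\lambda)$ yields
\begin{equation*}
	\limsup_{z\to0^-}f(z)\,n(\lambda+\mu,B(z))\le l(\lambda)\le\liminf_{z\to0^-}f(z)\,n(\lambda-\mu,B(z)).
\end{equation*}
Relabelling (replacing $\lambda$ by $\lambda-\mu$ in the left estimate and by $\lambda+\mu$ in the right one) brings both to the same variable:
\begin{equation*}
	l(\lambda+\mu)\le\liminf_{z\to0^-}f(z)\,n(\lambda,B(z))\le\limsup_{z\to0^-}f(z)\,n(\lambda,B(z))\le l(\lambda-\mu)
\end{equation*}
for all sufficiently small $\mu>0$. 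Letting $\mu\to0^+$ and invoking the continuity of $l$ at $\lambda$ squeezes the two bounds together, giving $\lim_{z\to0^-}f(z)\,n(\lambda,B(z))=l(\lambda)$.

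The main obstacle, modest as it is, is the uniform control of $n(\mu,\pm K(z))$ near $z=0$: this is precisely where the hypothesis that $K$ stays compact and norm-continuous through $z=0$ — rather than merely for $z<0$ — is indispensable. Everything else is bookkeeping with the Weyl inequality together with the $\mu$-room afforded by the continuity of $l$. (This statement is Lemma~4.9 of \cite{sobolev}, and the argument just sketched is essentially the one given there.)
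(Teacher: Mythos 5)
Your argument is correct and complete. The paper itself offers no proof of this lemma --- it simply refers the reader to Lemma~4.9 of \cite{sobolev} --- and your reconstruction (the Weyl inequality $n(s+t,A+A')\le n(s,A)+n(t,A')$ applied in both directions, the uniform bound on $n(\mu,\pm K(z))$ near $z=0$ obtained from compactness and norm-continuity of $K$ up to $z=0$, and the final squeeze using continuity of $l$) is precisely the standard argument given in that reference.
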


\n
Then we have

\begin{proposition}\label{pro2}
\be\label{nsa}
		\lim_{R\to+\infty}\frac{n(1 ,S_R)}{2R}
= 	\lim_{z\to 0^-}\frac{n(1,\boldsymbol{A}(z))}{|\log|z||}	\,.		
	\ee

\end{proposition}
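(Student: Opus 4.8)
The plan is to connect $n(1,\boldsymbol{A}(z))$ to $n(1,S_R)$ through the chain of reductions established in Section~\ref{sect:lead_term}, and then invoke Lemma~\ref{compact} to discard all the compact, $z$-continuous remainders accumulated along the way. Concretely, by Lemma~\ref{laa0} we have $\boldsymbol{A}(z)=\boldsymbol{A}_0(z)+\boldsymbol{R}(z)$ with $\boldsymbol{R}(z)$ compact and continuous for $z\leq 0$; by Lemma~\ref{la0b}, $n(1,\boldsymbol{A}_0(z))=n(1,\mathcal{B}(z))$; by Lemma~\ref{lB}, $\mathcal{B}(z)=\tilde{\mathcal{B}}(z)+(\text{compact, continuous})$ and $n(1,\tilde{\mathcal{B}}(z))=n(1,\mathcal{B}_0(z))$; and by Lemma~\ref{lS_R}, $\mathcal{B}_0(z)=\mathcal{S}(z)+(\text{compact, continuous})$ and $n(1,\mathcal{S}(z))=n(1,S_{R(z)})$ with $R(z)=\tfrac12|\log|z||$. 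So the first half of the argument is simply to apply Lemma~\ref{compact} successively, with the scaling function $f(z)=1/|\log|z||$ (which indeed tends to $0$ as $z\to 0^-$), peeling off one compact remainder at a time; the hypothesis of Lemma~\ref{compact} at each stage — existence of a limit that is continuous in $\lambda$ — is propagated backwards from the innermost operator.

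Thus the whole statement reduces to showing that the limit
\be
\lim_{z\to 0^-}\frac{n(1,S_{R(z)})}{|\log|z||}
\ee
exists. But $R(z)=\tfrac12|\log|z||$, so $|\log|z||=2R(z)$, and as $z\to 0^-$ we have $R(z)\to+\infty$ through a continuum of values; hence this limit is exactly
\be
\lim_{R\to+\infty}\frac{n(1,S_R)}{2R},
\ee
whose existence is the content of Proposition~\ref{pro1} (applied with $\lambda=1$). Moreover Proposition~\ref{pro1} also gives continuity of the limiting function $\lambda\mapsto \lim_R n(\lambda,S_R)/(2R)$ in $\lambda>0$, which is precisely the extra input that Lemma~\ref{compact} demands at each peeling step: one works with general $\lambda>0$ throughout, not just $\lambda=1$, so that the $\lambda$-continuity hypothesis is available, and only specializes to $\lambda=1$ at the very end.

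The one genuinely delicate point is bookkeeping the role of $\lambda$ versus the role of $z$. Lemma~\ref{la0b}, Lemma~\ref{lB} (eq.~\eqref{boh}) and Lemma~\ref{lS_R} are stated as identities $n(\lambda,\cdot)=n(\lambda,\cdot)$ for all $\lambda>0$ (or their proofs are uniform in $\lambda$), so they transfer the limit function $\lambda\mapsto l(\lambda)$ unchanged; but the scaling step $\tilde{\mathcal{B}}(z)\simeq\mathcal{B}_0(z)$ and $\mathcal{S}(z)\simeq S_{R(z)}$ is where the variable $z$ is converted into the cutoff radius $R$, and one must check that the identification $|\log|z||=2R(z)$ is used consistently so that the normalizing factor $f(z)=1/|\log|z||$ matches the $1/(2R)$ normalization in Proposition~\ref{pro1}. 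I expect this to be the main obstacle only in the sense of requiring care, not depth: once the dictionary $z\leftrightarrow R$ is fixed, each application of Lemma~\ref{compact} is mechanical, and the existence and $\lambda$-continuity of the limit are already in hand from Proposition~\ref{pro1}. Putting the pieces together — start from Proposition~\ref{pro1} at the innermost level, climb back out through Lemmas~\ref{lS_R}, \ref{lB}, \ref{la0b}, \ref{laa0}, invoking Lemma~\ref{compact} after each compact perturbation, and finally set $\lambda=1$ — yields \eqref{nsa}.
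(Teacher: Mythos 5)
Your proposal is correct and follows exactly the same route as the paper, whose proof of this proposition is a one-line citation of Proposition~\ref{pro1}, Lemma~\ref{compact}, and Lemmas~\ref{laa0}, \ref{la0b}, \ref{lB}, \ref{lS_R}; you have simply spelled out the successive applications of Lemma~\ref{compact} and the $z\leftrightarrow R$ dictionary that the authors leave implicit.
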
  

\begin{proof}
The proof is obtained using proposition \ref{pro1}, lemma \ref{compact} and lemmas \ref{laa0}, \ref{la0b}, \ref{lB}, \ref{lS_R}.

\end{proof}

\n
Let us prove our main result in the case $m<m_*$.

\vs
\begin{proof}[Proof of theorem~\ref{th:main1} (case $m<m_*$)] 
	By  theorem~\ref{th:fadd} and propositions~\ref{pro1},  \ref{pro2} we find that the limit relation \eqref{eq:main} holds with 	
\begin{equation}\label{eq:C(m)2}
	\mathcal{C}(m):=\sum_{\substack{l=1,\\ l\textup{ odd}}}^{+\infty}\frac{2l+1}{2\pi}\left|\left\{k\in[0,+\infty)\,|\,\hat{S}^{(l)}(k)> \frac{1}{\sqrt{2\pi}}\right\}\right|
\end{equation}
	where we have used the parity of $\hat{S}^{(l)}(k)$ and the sum is only for $l$ odd due to the property \eqref{eq:S^l_sign}. It remains to show that $\mathcal C(m)$ is finite and strictly positive. 
	
\n
Let us prove the upper bound for  $\mathcal C(m)$.  We first  look for an estimate of  $\hat{S}^{(l)}(k)$ for $l$ odd and $k\geq 0.$  
By Cauchy-Schwarz inequality we have
\begin{equation}
	\begin{aligned}
		\hat{S}^{(l)}(k)&\leq \frac{m+1}{\sqrt{2\pi}\sqrt{m(m+2)}}\int_0^1dy\,|P_l(y)|\frac{\sinh\left(k\arcsin\left(\frac{y}{m+1}\right)\right)}{\sinh\left(k\frac{\pi}{2}\right)\cos\left(\arcsin\left(\frac{y}{m+1}\right)\right)}\\
											&\leq \frac{(m+1)^{3/2}}{\sqrt{2\pi}\sqrt{m(m+2)}}\frac{1}{\sqrt{2l+1}}\left[\int_0^{z_0} dz \frac{\sinh^2(kz)}{\sinh^2\left(k\frac{\pi}{2}\right)\cos z}\right]^{1/2}										
	\end{aligned}
\end{equation}
where $z_0=\arcsin\left(\frac{1}{m+1}\right).$  Using the estimate 
 $\displaystyle{\frac{\sinh (kz)}{\sinh(k\pi/2)}\leq e^{-k(\pi/2-z_0)}}\;$ for $z\in(0,z_0)$,  we find
\begin{equation}
	\begin{aligned}
		\hat{S}^{(l)}(k)&\leq \frac{(m+1)^{3/2}}{\sqrt{2\pi}\sqrt{m(m+2)}}\frac{1}{\sqrt{2l+1}} e^{-k(\pi/2-z_0)}\left[\int_0^{z_0}dz \frac{1}{\cos z}\right]^{1/2}\\
										&\leq \frac{(m+1)^{3/2}}{\sqrt{2\pi}\sqrt{m(m+2)}}\frac{1}{\sqrt{2l+1}} e^{-k(\pi/2-z_0)}\left[\log\left(\tan \left(\frac{\pi}{4}  +  \frac{z_0}{2}\right)\right)\right]^{1/2}\\
										&\leq \frac{(m+1)^{3/2}}{\sqrt{2\pi}\sqrt{m(m+2)}}\frac{1}{\sqrt{2l+1}} e^{-k(\pi/2-z_0)}\left[\log\left(\sqrt{1+\frac{2}{m}}\right)\right]^{1/2}
	\end{aligned}
\end{equation}
where in the last step we have used the elementary formula $\tan \frac{x}{2}=\frac{\sin x}{1+\cos x}$ and the definition of $z_0.$ 
Taking into account of the definition of $\alpha(m)$ and $\beta(m)$ given in \eqref{eq:alphabeta}, we have shown
\begin{equation}\label{eq:S_l_upper}
	\hat{S}^{(l)}(k)\leq \frac{\alpha(m)}{\sqrt{2l+1}}e^{-\beta (m)k}.
\end{equation}

\n
By equation \eqref{eq:C(m)2} and the above inequality we obtain 
\begin{equation}\label{stiaex}
	\mathcal{C}(m)  \leq \sum_{\substack{l=1,\\ l\textup{ odd}}}^{+\infty}\frac{2l+1}{2\pi}\left|\left\{k\in[0,+\infty)\,|\,\frac{\alpha(m)}{\sqrt{2l+1}}e^{-\beta (m)k}> \frac{1}{\sqrt{2\pi}}\right\}\right|.
\end{equation}
The measure of the set in the r.h.s. of \eqref{stiaex}  is different from zero only if 
\begin{equation}
	\frac{\alpha(m)}{\sqrt{2l+1}}>\frac{1}{\sqrt{2\pi}}
\end{equation}
i.e., only if  $l\leq l_0(m)$, where $l_0(m)$ is the largest odd integer smaller that $\pi \alpha(m)^2 - \f{1}{2}$.

\n
Therefore we have
\begin{equation}
	\mathcal{C}(m)\leq \sum_{\substack{l=1,\\l\textup{ odd}}}^{l_0(m)}\frac{2l+1}{2\pi}\left|\left\{k\in[0,+\infty)\,|\,\frac{\alpha(m)}{\sqrt{2l+1}}e^{-\beta (m)k}> \frac{1}{\sqrt{2\pi}}\right\}\right|.
\end{equation}

\n
For any $l \leq l_0(m)$, let
\begin{equation}
	K_l(m):=\frac{1}{\beta(m)}\log\frac{\sqrt{2\pi}\,\alpha(m)}{\sqrt{2l+1}}
\end{equation} 
be the unique positive solution  of the equation   $\displaystyle{\frac{\alpha(m)}{\sqrt{2l+1}}e^{-\beta(m) k}=\frac{1}{\sqrt{2\pi}}}$. Then 

\begin{equation}
	\begin{aligned}
		\mathcal{C}(m)&\leq 
\sum_{\substack{l=1,\\ l\textup{ odd}}}^{l_0(m)} 
\frac{2l+1}{2\pi\beta(m) }
\log\left( \frac{  \sqrt{2\pi}\, \alpha(m)  }{  \sqrt{2l+1} }\right)\\
&\leq \frac{1}{2\pi\beta(m)} \log\left( \frac{ \sqrt{2\pi}\, \alpha(m) }{ \sqrt{3}}\right)
\sum_{\substack{l=1,\\ l\textup{ odd}}}^{l_0(m)} (2l+1)\\
									&= \frac{1}{4\pi\beta(m)}
\log\left( \frac{ \sqrt{2\pi}\, \alpha(m) }{ \sqrt{3}}\right)(l_0(m)^2+3l_0(m)+2)
	\end{aligned}
\end{equation}
and this concludes the proof of \eqref{eq:C(m)_upper}.

\vs
\n
Let us prove the lower bound for $\mathcal C(m)$.
By \eqref{eq:C(m)2} we immediately get
	\begin{equation}\label{eq:low_bound}
		\mathcal{C}(m)\geq \frac{3}{2\pi} \left|\left\{k\in[0,+\infty)\,|\,\hat{S}^{(1)}(k)> \frac{1}{\sqrt{2\pi}}\right\}\right|.
	\end{equation}
	
	\n
Using \eqref{maxs} and definition~\ref{def:m*}, we find that 
\be
\hat{S}^{(1)}(0) = \f{\Lambda(m)}{\sqrt{2\pi}} >\frac{1}{\sqrt{2\pi}} \;\;\;\;\text{for}\;\; m<m_*
\ee
 and this implies strictly positivity of the right hand side of \eqref{eq:low_bound}. 	Furthermore, by monotonicity of $\hat{S}^{(1)}(k)$, we have
	\begin{equation}\label{eq:k_1(m)}
		\left|\left\{k\in[0,\infty) | \hat{S}^{(1)}(k)>\frac{1}{\sqrt{2\pi}}\right\}\right|=k_1(m)
	\end{equation}
	where $k_1(m)$  is the unique positive solution of the equation  $\hat{S}^{(1)}(k)=\frac{1}{\sqrt{2\pi}}$.  From \eqref{eq:low_bound} and \eqref{eq:k_1(m)} the lower bound \eqref{eq:C(m)} follows.

\end{proof}

\begin{remark}
Let us check that $l_0(m) \geq 1$ for $m<m_*$. Recall that $l_0(m)$ is the largest odd integer smaller than $g(m):= \pi \alpha(m)^2 - 1/2$. It is easy to see that the function $g$ is decreasing, $g(m)\to +\infty$ for $m\to 0$ and $g(m)\to 0$ for $m\to +\infty.$ Then the assertion  follows if one observes that $g(m_*) >1$ (indeed, $g(m_*)\simeq 6.65$).

		
\end{remark}

\begin{remark}
	The lower bound of $\mathcal{C}(m)$ can be improved. Recall that  $m_* :=m_1^*$ is defined as the solution of $\sqrt{2\pi} \,\hat{S}^{(1)}(0) = \Lambda(m) =1$. By analogy, for each $l$ odd we can define  a critical mass $m_l^*$ as the solution of  $\sqrt{2\pi}\, \hat{S}^{(l)}(0)=1.$  By \eqref{eq:S^l_e_o}, $m_l^*$ solves the equation 
	\begin{equation}	
		\frac{2(m+1)}{\pi\sqrt{m(m+2)}}\int_0^1dy\, P_l(y)\frac{\arcsin\left(\frac{y}{m+1}\right)}{\cos\left(\arcsin\left(\frac{y}{m+1}\right)\right)}=1.
	\end{equation}
	One can show (see \cite[appendix A]{CDFMT15}) that $m_l^*$ is uniquely defined, it is  decreasing in $l$ and $\sqrt{2\pi}\, \hat{S}^{(l)}(0)>1$ $\left(\sqrt{2\pi}\, \hat{S}^{(l)}(0)<1\right)$  if $m<m_l^*$ ($m>m_l^*$). 
	Hence, if $m_{L+2}^*<m<m_L^*$ for some $L$ then in \eqref{eq:C(m)2} each term with $l\leq L$ is certainly nonzero and this implies
	\begin{equation}
		\mathcal{C}(m)\geq \sum_{\substack{l=1\\l \textup{ odd}}}^{L} \frac{2l+1}{2\pi}\left|\left\{k\in[0,+\infty)|\hat{S}^{(l)} (k) >\frac{1}{\sqrt{2\pi}}\right\}\right|.
	\end{equation}
	Obviously, for $L=1$ we have only the first term and the inequality above  reduces to \eqref{eq:low_bound}.
\end{remark}

\begin{remark}
	It is worth noticing that, by definition~\ref{def:m*}, we have $\sqrt{2\pi}\, \hat{S}^{(1)}(0)< 1$ for $m>m_*$ which, via \eqref{eq:S^l_mon} and \eqref{maxs}, implies $\mathcal{C}(m)=\displaystyle{\lim_{z\to 0^-}\frac{N(z)}{|\log|z||}=0.}$ In fact, we conclude this section showing that for $m>m_*$ the discrete spectrum of $H$ is finite, i.e., there is no   Efimov effect.
\end{remark}
\vsa

\vs
\begin{proof}[Proof of theorem~\ref{th:main1} (case $m>m_*$)] 
For $z<0$ and $\varepsilon \in (0,1) $ the following inequality holds
	\ba\label{eq:weyl}
		n(1,\boldsymbol{A}(z)) &\leq& n(1-\varepsilon, \mathcal{S}(z)) + n\left(\frac{\varepsilon}{3},\boldsymbol{A}(z)-\boldsymbol{A}_0(z) \right)+n\left(\frac{\varepsilon}{3},\mathcal{B}(z)-\tilde{\mathcal{B}}(z)\right)
		\nonumber\\
		&+& n\left(\frac{\varepsilon}{3},\mathcal{B}_0(z)-\mathcal{S}(z)\right)
	\ea
where the operators $\boldsymbol{A}_0(z), \mathcal{B}(z), \tilde{\mathcal{B}}(z), \mathcal{S}(z)$ have been defined in the previous section.  Such an inequality is a direct consequence of lemmas~\ref{laa0}, \ref{la0b}, \ref{lB}, \ref{lS_R} and of the following technical result (see, e.g.,  \cite{BS}): if $A,B$ are compact selfadjoint operators and $\lambda_i>0,$ $i=1,2$ then
	\begin{equation}
		n(\lambda_1+\lambda_2,A+B)\leq n(\lambda_1,A)+n(\lambda_2,B).
	\end{equation}
Note that for $z\leq0$ the operators appearing in the last three terms of \eqref{eq:weyl} are compact and continuous in $z$. Therefore, the last three terms of \eqref{eq:weyl} remain finite for $z \rightarrow 0^-$.

\n
Let us consider the first term in the right hand side of \eqref{eq:weyl}, i.e., $n(1-\varepsilon,\mathcal{S}(z))$.

	\n
	We notice that
	\begin{equation}\label{eq:Sxi_z}
		\begin{aligned}
			(\mathcal{S}(z)\xi,\xi)=(\mathcal{S}\xi_z,\xi_z) = (SM \xi_z, M \xi_z)
		\end{aligned}
	\end{equation}
	where $\xi_z(p)=(\chi_{|z|^{-1/2}}-\chi_1)(p)\xi(p)$, $\mathcal{S}$  is the operator in $L^2(\R^3)$  defined by
	\begin{equation}
		(\mathcal{S} f) (p)=-\frac{1}{4\pi^2}\frac{(m+1)^2}{\sqrt{m^3(m+2)}}\int dq\,\frac{f(q)}{|p|^{1/2}\left(\frac{p^2}{2\mu}+\frac{q^2}{2\mu}+\frac{p\cdot q}{m}\right)|q|^{1/2}},
	\end{equation}
$M$ is the unitary operator     defined in \eqref{eq:M} and $S$ acts   on $L^2(\R_+\times \mathbb{S}^2,dr\otimes d\Omega)$ as follows
	\begin{equation}
		(Sf) (r,\omega)=-\frac{1}{4\pi^2}\frac{m+1}{\sqrt{m(m+2)}}\int_{-\infty}^{+\infty}d\rho\int_{\mathbb{S}^2}d\Omega(\zeta) \frac{f(\rho,\zeta)}{\cosh(r-\rho)+\frac{\omega\cdot \zeta}{m+1}}.
	\end{equation}
Let us estimate $(Sf,f)$. By \eqref{eq:f_lm} and \eqref{eq:S^l} we get
	\begin{equation}
		\begin{aligned}
			(Sf,f)&=-\frac{1}{2\pi}\frac{m+1}{\sqrt{m(m+2)}} \sum_{l=0}^{+\infty}\sum_{\nu=-l}^l \int_{-\infty}^{+\infty}\!\!\!dr\,\overline{f_{l\nu}(r)}\int_{-\infty}^{+\infty}\!\!\!d\rho\, f_{l\nu}(\rho) \int_{-1}^1dy\, \frac{P_l(y)}{\cosh(r-\rho)+\frac{y}{m+1}}\\
						&=\sum_{l=0}^{+\infty}\sum_{\nu=-l}^l \sqrt{2\pi} (\hat{f}_{l\nu},\hat{f}_{l\nu}\hat{S}^{(l)})\\
						&=\sqrt{2\pi}\sum_{l=0}^{+\infty}\sum_{\nu=-l}^l \int_{-\infty}^{+\infty}dk\, |\hat{f}_{l\nu}(k)|^2\hat{S}^{(l)}(k).
		\end{aligned}
	\end{equation}
	
	\n
Taking into account of  \eqref{eq:S^l_sign}, \eqref{eq:S^l_mon} and ~\eqref{maxs} we deduce
	\begin{equation}\label{sff}
		\begin{aligned}
			(Sf,f)&\leq \sqrt{2\pi}\sum_{\substack{l=1\\l\textup{ odd}}}^{+\infty}\sum_{\nu=-l}^l \int_{-\infty}^{+\infty}dk\, |\hat{f}_{l\nu}(k)|^2\hat{S}^{(1)}(k)\\
						&\leq \Lambda(m)
						\sum^{+\infty}_{\substack{l=1\\l\textup{ odd}}}\sum_{\nu=-l}^l \norma{f_{l\nu}}^2\\
						&\leq \Lambda(m)\norma{f}^2.
		\end{aligned}
	\end{equation}
By \eqref{eq:Sxi_z} and \eqref{sff} we have	
\be
(\mathcal S(z) \xi_z, \xi_z) \leq \Lambda(m) \|M\xi_z \|^2 \leq \Lambda(m) \|\xi\|^2
\ee
where $0<\Lambda(m)<1$ for $m>m_*$ (see definition~\ref{def:m*}).   Therefore, the operator $\mathcal S(z)$ does not have eigenvalues larger than $\Lambda(m)$, i.e.,
\be\label{ns0}
n(1-\varepsilon, \mathcal S(z))=0
\ee
for any $\varepsilon \in (0, 1-\Lambda(m))$. By theorem \ref{th:fadd} and   \eqref{eq:weyl}, \eqref{ns0} we find
\be\label{nz0}
N(z) \leq n\left(\frac{\varepsilon}{3},\boldsymbol{A}(z)-\boldsymbol{A}_0(z) \right)+n\left(\frac{\varepsilon}{3},\mathcal{B}(z)-\tilde{\mathcal{B}}(z)\right)
		+ n\left(\frac{\varepsilon}{3},\mathcal{B}_0(z)-\mathcal{S}(z)\right)
\ee
for any $z<0$ and $\varepsilon \in (0, 1-\Lambda(m))$.
	Taking the limit $z\to0^-$ and using compactness and continuity in $z\leq0$ of the operators in the right hand side of \eqref{nz0}, we obtain the finiteness of the number of negative eigenvalues of $H.$
	
\end{proof}

\vs\vs

\n
 This work is partially supported by Gruppo Nazionale
per la Fisica Matematica (GNFM-INdAM). 
The authors thank M. Correggi and D. Finco for many interesting discussions on the subject of this work.

\vs\vs\vs


\vs\vs


\begin{thebibliography}{11}

\vs
	\addcontentsline{toc}{section}{\refname}
	\bibitem{ahkw}
	S. Albeverio, R. H{\o}egh-Krohn, T.T. Wu, A class of exactly solvable three-body quantum mechanical problems and the universal low energy behavior. Phys. Lett. {\bf 83A}, n. 3 (1981), 105-109.
	
		\bibitem{BS}
		M. Sh. Birman, M. Z. Solomyak, Spectral Theory of Self-Adjoint Operators in Hilbert Spaces. D. Reidel Publishing Company, Dordrecht, Holland, 1987.
\bibitem{braaten}
E. Braaten, H.W. Hammer, Universality in few-body systems with large scattering length. Phys. Rep. {\bf 428} (2006), 259-390.
\bibitem{castin}
Y. Castin, E. Tignone, Trimers in the resonant $2+1$ fermionic problem on a narrow Feshbach resonance: Crossover from Efimovian to Hydrogenoid spectrum. Phys. Rev. A {\bf 84} (2011),  062704.
	\bibitem{CDFMT}
		M. Correggi, G. Dell'Antonio, D. Finco, A. Michelangeli, A. Teta, Stability for system of $N$ fermions plus a different particle with zero-range interactions. Rev. Math. Phys. {\bf 24} (2012), 1250017.
	\bibitem{CDFMT15}
		M. Correggi, G. Dell'Antonio, D. Finco, A. Michelangeli, A. Teta, A class of Hamiltonians for a three-particle fermionic system at unitarity.  Math. Phys. Anal. Geom. {\bf 18} (2015), 1-36.
		\bibitem{efimov1}
V. Efimov,  Weakly-bound states of three resonantly interacting particles. Yad. Fiz. {\bf 12} (1970), 1080-1091. [Sov. J. Nucl. Phys. {\bf 12} (1971), 589-595].
		\bibitem{efimov2}
		V. Efimov,  Energy levels arising  from resonant two-body forces in a three-body system. Phys. Lett. B {\bf 33} (1970), 563-564.
		\bibitem{erdelyi}
		A. Erdelyi et al., Tables of integral transforms. McGraw-Hill, New York, 1954.
	\bibitem{faddeev}
		L. D. Faddeev, Mathematical questions in the quantum theory of scattering for a system of three particles. Trudy Mat. Inst. Steklov., 69 (1963); English transl.: Matematical aspects of the three-body problem in quantum scattering theory.  Israel Problem for Scientific Translations, Jerusalem; Davey, New York, 1965.
\bibitem{fincoteta}
D. Finco, A. Teta, Quadratic forms for the fermionic unitary gas model. Rep. Math. Phys. {\bf 69}, n. 2 (2012), 131-159.

\bibitem{gridnev1}
D. Gridnev, Three resonating fermions in flatland: proof of the super Efimov effect and the exact discrete spectrum asymptotics. J. Phys. A: Math. Theor. {\bf 47} (2014), 505204.
\bibitem{gridnev2}
D. Gridnev, Universal low-energy behavior in three-body system. J. Math. Phys. {\bf 56}  (2015), 022107.
	\bibitem{simon}
		M. Klaus, B. Simon, Binding of Schr\"{o}dinger particles through conspiracy of potential wells. Ann. Inst. H. Poincaré {\bf A30}, (1979), 83-87.
		\bibitem{minlos}
R.A. Minlos, A system of three quantum particles with point-like interactions. Russina Math. Surveys {\bf 69}, n. 3 (2014), 539-564.

\bibitem{motovilov}
		A.K. Motovilov, Progress in methods to solve the Faddeev and Yakubovsky differential equations. arXiv:0712.0620v1 [quant-ph] 4 Dec. 2007.
\bibitem{naidon}
P. Naidon, S. Endo, Efimov Physics: a review. arXiv:1610.09805v1 [quant-ph]  31 Oct 2016.
		\bibitem{sigal}
Y. N. Ovchinnikov, I. M. Sigal, Number of bound states of three body systems and Efimov's effect. Ann. Phys. {\bf 123} (1979), 274-295.		


\bibitem{rs2}
	M. Reed, B. Simon, Methods of Modern Mathematical Physics, vol. II: Fourier Analysis, Self-Adjointness. Academic Press, San Diego, 1975.
	\bibitem{sobolev}
	A. V.	Sobolev, The Efimov effect. Discrete spectrum Asymptotics. Commun. Mat. Phys. {\bf 156} (1993), 101-126. 
	
\bibitem{Tam1}
	H. Tamura, The Efimov effect of three-body Schr\"{o}dinger operators. J. Funct. Anal. {\bf 95} (1991), 433-459.
	\bibitem{Tam2}
	H. Tamura, The Efimov effect of three-body Schr\"{o}dinger operators: asymptotics for the number of negative eigenvalues.  Nagoya Math. J. {\bf 130} (1993), 55-83.

	\bibitem{VZ}
		S. A. Vugal'ter, G. M. Zhislin, The simmetry and Efimov's effect in systems of three quantum particles. Commun. Math. Phys. {\bf 87} (1982), 89-103.
	\bibitem{yafaev}
		D. R. Yafaev, 		On the theory of the discrete spectrum of the three-particle Schr\"odinger operator. Math. USSR - Sb. {\bf 23} (1974), 535-559.
		\bibitem{yafaev_virt}
D. R. Yafaev, The virtual level of the Schr\"{o}dinger equation, J. Math. Sci. {\bf 11} (1979), 501-510. 
\end{thebibliography}
\end{document}